\documentclass[11pt,twoside]{article}
\usepackage{amsmath}
\usepackage{amssymb}
\textwidth=6.5in
\oddsidemargin=0in
\evensidemargin=0in
\topmargin=0in
\textheight=8in
\usepackage{makeidx}
\usepackage{graphicx}
\usepackage{latexsym}
\usepackage{float}
\usepackage{amssymb,amsmath,amsfonts}
\usepackage{xspace}
\usepackage{hyperref}
\usepackage{multirow}
\usepackage[usenames]{color}
\newtheorem{theorem}{Theorem}[section]
\newtheorem{lemma}[theorem]{Lemma}

\newtheorem{corollary}[theorem]{Corollary}

\newtheorem{definition}[theorem]{Definition}

\allowdisplaybreaks[3]      % Allow page breaks in the middle of
                            % displayed equations

\makeatletter

\makeatother

\makeatletter
\floatstyle{ruled}
\newfloat{fragment}{H}{lop}
\floatname{fragment}{Algorithm}
\renewcommand{\floatc@ruled}[2]{\vspace{2pt}{\@fs@cfont #1.\:} #2 \par
  \vspace{1pt}}
\makeatother

% A primitive 'pseudocode' environment
\newcommand{\mypseudocodelabel}[1]{\hfil}

\DeclareMathSymbol{\qedsymb} {\mathord}{AMSa}{"04}
%\newsymbol\qedsymb 1004

%\renewcommand{\ln}{\log}

\newcommand{\ceil}[1]{\left\lceil #1 \right\rceil}
\newcommand{\floor}[1]{\left\lfloor #1 \right\rfloor}
        % Set notation: { ... | ... }

% Probability: Pr[ .. ]
  % Probability: Pr[ .. | ..  ]

% Expectation: E[ blah ]
 % Expectation: E[ .. | .. ]

                                   % Double quad

                     % Set

\newcommand{\pred}{\mbox{pred}}
\newcommand{\dist}{\mbox{dist}}

\newcommand{\val}{\mathrm{val}}

\newcommand{\varass}{\leftarrow}

\newcommand{\CorollaryName}[1]{\label{cor:#1}}
\newcommand{\DefinitionName}[1]{\label{def:#1}}

\newcommand{\LemmaName}[1]{\label{lem:#1}}

\newcommand{\SectionName}[1]{\label{sec:#1}}
\newcommand{\TheoremName}[1]{\label{thm:#1}}
\newcommand{\FigureName}[1]{\label{fig:#1}}

\newcommand{\Corollary}[1]{Corollary~\ref{cor:#1}}
\newcommand{\Definition}[1]{Definition~\ref{def:#1}}

\newcommand{\Lemma}[1]{Lemma~\ref{lem:#1}}

\newcommand{\Theorem}[1]{Theorem~\ref{thm:#1}}
\newcommand{\Figure}[1]{Figure~\ref{fig:#1}}

\newcommand{\proofbelow}{8pt}
\newcommand{\afterproof}{\hfill $\blacksquare$ \par \vspace{\proofbelow}}
\newcommand{\aftersubproof}{\hfill $\Box$ \par \vspace{\proofbelow}}
\newenvironment{proof}{\noindent\textbf{Proof.}\,}{\afterproof}

\newenvironment{proofof}[1]{\noindent\textit{Proof} \,(of #1).\,}{\afterproof}

\renewcommand{\th}{\ifmmode{^{\textrm{th}}}\else{\textsuperscript{th}\ }\fi}

\newcommand{\tour}{\texttt{SampledTournament}\xspace}
\newcommand{\kmaxfind}[1]{$#1$-{\texttt{MaxFind}\xspace}}
\newcommand{\ksort}[1]{$#1$-{\texttt{Sort}\xspace}}
\newcommand{\kpivot}[1]{$#1$-\texttt{Pivot}\xspace}
\newcommand{\kselect}[1]{$#1$-\texttt{Select}\xspace}
\newcommand{\onecover}{$1$-\texttt{Cover}\xspace}

\newcommand{\roundRobin}{round-robin }

\newcommand{\ignore}[1]{\relax}
\newcommand{\alequ}[1]{\begin{align} #1 \end{align}}
\newcommand{\alequn}[1]{\begin{align*} #1 \end{align*}}

\newcommand{\outd}{d_{\mbox{out}}}

\renewcommand{\thefootnote}{\fnsymbol{footnote}}
\title{Sorting and Selection with Imprecise Comparisons\footnote{A preliminary version of this work containing weaker forms of some of the results has appeared in the proceedings of 36th International Colloquium on Automata, Languages and Programming (ICALP) 2009.}}
\author{Mikl\'{o}s Ajtai\\ IBM Research - Almaden \\ {\tt ajtai@us.ibm.com} \and Vitaly Feldman \\ IBM Research - Almaden \\{\tt vitaly@post.harvad.edu} \and
  Avinatan Hassidim\thanks{Part of this work was done while the author was at Google Research, Israel.} \\ Bar Ilan University, Israel \\{\tt avinatanh@gmail.com}, \and Jelani Nelson\thanks{Supported by NSF grant CCF-0832797. Part of this work was done while the author was at IBM Research - Almaden.} \\ Harvard University \\{\tt minilek@seas.harvard.edu}}

%\institute{IBM Almaden Research Center, San Jose CA 95120, USA \and MIT, Cambridge MA 02139, USA}

\begin{document}

\pagestyle{plain}

\maketitle

\renewcommand{\thefootnote}{\arabic{footnote}}
\begin{abstract}
We consider a simple model of imprecise comparisons: there exists
some $\delta>0$ such that when a
subject is given two elements to compare, if the values of those elements
(as perceived by the subject)
differ by at least $\delta$, then the comparison will be made
correctly; when the two elements have values
that are within $\delta$, the outcome of the comparison is
unpredictable. This model is inspired by both imprecision in human judgment of values and also by bounded but potentially adversarial errors 
in the outcomes of sporting tournaments.

Our model is closely related to a number of models commonly considered in the psychophysics literature where
$\delta$ corresponds to the {\em just noticeable difference unit
(JND)} or {\em difference threshold}. In experimental psychology, the method of paired comparisons
was proposed as a means for ranking preferences amongst $n$ elements of a human subject. The method requires performing
all $\binom{n}{2}$ comparisons, then sorting elements according to the
number of wins.  The large number of comparisons is performed
to counter the potentially faulty decision-making of the human
subject, who acts as an imprecise comparator. 

We show that in our model the method of paired comparisons has optimal
accuracy, minimizing the errors introduced by the imprecise
comparisons. However, it is also wasteful, as it requires all $\binom{n}{2}$.
We show that the same optimal
guarantees can be achieved using $4 n^{3/2}$ comparisons, and we
prove the optimality of our method. We then explore the general tradeoff
between the guarantees on the error that can be made and number of
comparisons for the problems of sorting, max-finding, and
selection. Our results provide strong lower bounds and close-to-optimal solutions for each of
these problems.
\end{abstract}

\section{Introduction}
Let $x_1, \ldots, x_n$ be $n$ elements where each $x_i$ has
an unknown value $\val(x_i)$. We want to find the element with the
maximum value using only pairwise
comparisons. However, the outcomes of comparisons are imprecise in the
following sense.
For some fixed $\delta >0$, if $|\val(x_i) - \val(x_j)| \le \delta$,
then the result of the comparison can be either ``$\geq$" or
``$\leq$''. Otherwise, the result of the comparison is correct. It is
easy to see that in such a setting it might be impossible to find the
true maximum (for example when the values of all the elements are
within $\delta$). It might however be possible to identify an
approximate
maximum, that is an element $x_{i^*}$ such that for all $x_i$, $\val(x_i)
- \val(x_{i^*}) \le k \delta$ for some, preferably small, value
$k$. In addition, our goal is to minimize the number of comparisons
performed to find $x_{i^*}$. We refer to the minimum value $k$ such
that an algorithm's output is always guaranteed to be $k\delta$-close
to the maximum as the {\em error}
of the algorithm in this setting. Similarly, to sort the above
elements with error $k$ we need to find a permutation $\pi$ such that
if $\pi(i) < \pi(j)$ then $\val(x_i) - \val(x_j) \le k
\delta$.
% For selection with error $k$, we are given some $i\in[n]$
% and wish to find
% a $y$ such that there is a partition $X_1,X_2$ of the input with
% $y\in X_1$, $|X_i| = i$, $\val(x')\ge \val(x) - k\delta$ for all
% $(x,x')\in X_1\times X_2$, and $\val(y)\ge \val(x)-k\delta$ for all
% $x\in X_1$.

A key issue that our work addresses is that in any sorting (or
max-finding) algorithm, errors resulting from imprecise comparisons
might accumulate, causing the final output to have high
error. Consider, for example, applying the classical bubble sort
algorithm to a list of elements that are originally sorted in the
reverse order and where the difference between two adjacent elements is
exactly $\delta$. All the comparisons will be between elements within
$\delta$ and therefore, in the worst case, the order will not be
modified by the sorting, yielding error $(n-1)\delta$. Numerous other known
algorithms that primarily optimize the number of comparisons can be
easily shown to incur a relatively high error. As can be easily
demonstrated (\Theorem{roundrobin}), performing all $\binom{n}{2}$
comparisons then sorting elements according to the number of wins, a
``round-robin tournament'',
achieves error $k= 2$, which is lowest possible
(\Theorem{performance-2}). A natural question we ask here is whether
$\binom{n}{2}$ comparisons are necessary to achieve the same error. We
explore the same question for all
values of $k$ in the problems of sorting, max-finding, and general
selection.

One motivation for studying this problem comes from social
sciences. A common problem both in experimental psychology and
sociology is to have
a human subject rank preferences amongst many candidate options.  It
also occurs frequently in marketing research \cite[Chapter 10]{SA05},
and in training information retrieval algorithms using human
evaluators \cite[Section 2.2]{AAC+08}.
The basic method to elicit preferences is to present the
subject two alternatives at a time and ask which is the
preferred one. The common approach to this problem today was presented
by Thurstone as early as 1927, and is called the ``method of paired
comparisons'' (see \cite{David88} for a thorough treatment). In this
method, one asks the
subject to give preferences for all
pairwise comparisons amongst $n$ elements. A ranked preference list is
then determined by the number of ``wins" each candidate element
receives. A central concept in these studies introduced as far back as
the 1800s by Weber and Fechner is that of the {\em just noticeable
  difference (JND) } unit or {\em difference threshold} $\Delta$.  If
two physical stimuli with intensities $x \leq y$ have $y \le x + \Delta x$, a
human will not be able to reliably distinguish which intensity is
greater. The idea was later generalized by Thurstone to
having humans not only compare physical stimuli, but also
abstract concepts \cite{Thurstone27a}. By the Weber-Fechner law, stimuli with intensities $x$ and $y$ cannot be distinguished when $1/(1+\Delta) \leq x/y \leq 1+\Delta$. This is equivalent to saying that intensities are indistinguishable when the absolute difference between the natural logarithms of the intensities is less than $\delta = \ln(1+\Delta)$. Therefore JND in the Weber-Fechner law corresponds to the imprecision $\delta$ of our model of comparisons when intensities are measured on the logarithmic scale. More generally, we can always assume that the intensities are measured on the scale for which JND corresponds to an absolute difference between values.

Most previous work on the method of paired comparisons has been
through the lens of statistics. In such work the JND is modeled as a
random variable and the statistical properties of Thurstone's method
are studied \cite{David88}. Our problem corresponds to a simplified model of
this problem which does not require any statistical assumptions and
is primarily from a combinatorial perspective.
% Further studies \cite{David88} have
% considered refinements of the basic schemes, such as handling with the
% case where part of the comparisons are not made. The main result there
% is a technique to handle a constant fraction of loss, with a small
% degradation in accuracy. Other work is devoted to assessing the values
% more efficiently assuming some prior distributions.

Another context that captures the intuition of our model is that of
designing a sporting tournament based on win/lose games.
There, biases of a judge and unpredictable events can change the
outcome of a game when the strengths of the players are close. Hence
one cannot necessarily assume that the outcome is truly random in such
a close call. It is clear that both restricting the influence of
the faulty outcomes and reducing the total number of games required
are important in this scenario, and hence exploring the tradeoff
between the two is of interest.
For convenience, in the rest of the paper we often use the terminology
borrowed from sporting tournaments.

Accordingly, the problems we consider have a natural interpretation as problems on a tournament graph (that is, a complete directed graph with only one edge between any two vertices). We can view all the comparisons that were revealed by the comparator as a digraph $G$. The vertices of $G$ are the $n$ elements and it
contains the directed edge $(x_i, x_j)$ if and only if a comparison between $x_i$ and $x_j$ has been made, and the comparator has responded with ``$x_i \geq x_j$''. At any point in time the comparison graph is a subgraph of some unknown tournament graph. The problem of finding a maximum element with error $k$ is then equivalent to finding a vertex in such a graph from which there exists a directed path of length at most $k$ to any other vertex while minimizing the number of edges which need to be revealed (or ``matches" in the context of tournaments). Such vertex is referred to as a $k$-{\em king} (or just king for $k=2$) \cite{Landau53,Maurer80}. Existence and properties of such elements for various tournament graphs have been studied in many contexts. Sorting with error $k$ gives an ordering of vertices such that if vertex $x_i$ occurs after $x_j$ in the order then there exists a directed path of length at most $k$ from $x_i$ to $x_j$. The connection to tournament graphs is made explicit in Section \ref{sec:lb}.

Finally, in a number of theoretical contexts responses are given by an imprecise oracle. For example, for weak oracles given by Lov\'{a}sz in the context of optimization \cite{Lovasz:86} and for the statistical query oracle in learning \cite{Kearns:98} the answer of the oracle is undefined when some underlying value $z$ is within a certain small range of the decision boundary.
When $z$ itself is the difference of two other values, say $z_1$ and $z_2$, then oracle's answer is, in a way, an imprecise comparison of $z_1$ and $z_2$. This correspondence together
with our error 2 sorting algorithm was used by one of the authors to derive algorithms in the context of evolvability \cite{Feldman:09robust}.

\subsection{Our results}
We first examine the simpler problem of finding only the maximum
element.
For this problem, we give a deterministic
max-finding algorithm with error $2$
using $2n^{3/2}$ comparisons. This contrasts with the method of
paired comparisons, which makes $(n^2-n)/2$ comparisons to achieve
the same error. Using our algorithm recursively,
we build deterministic algorithms with error $k$
that require $O(n^{1 + 1/(3 \cdot 2^{k-2} - 1)})$ comparisons.
We also give a lower bound of $\Omega(n^{1 +  1/(2^k - 1)})$ comparisons for the problem.
The bounds are almost tight --- the upper bound for our error $k$
algorithm is less than our lower bound for error $(k-1)$ algorithms.
We also give a linear-time randomized algorithm
that achieves error $3$ with probability at least $1 - 1/n^2$, showing
that randomization greatly changes the complexity of the problem.

We then study the problems of selecting an element of a certain order and sorting. For $k=2$, we give a deterministic algorithm that sorts using $4 n^{3/2}$ comparisons (and in particular can be used for selecting an element of any order). For general $k$, we show that selection of an element of any order $i$ can be achieved using $O(2^k \cdot n^{1 + 1/2^{k-1}})$ comparisons
and sorting with error $k$ can performed using $O(4^k \cdot n^{1 + 1/2^{k-1}})$ comparisons.

We give a lower bound of $\Omega(n^{1  + 1/2^{k-1}})$ comparisons for sorting with error $k$. When $k=O(1)$ our bounds are tight up to a constant factor and are at most a $\log n$ factor off for general $k$. Our lower bounds for selection depend on the order of the element that needs to be selected and interpolate between the lower bounds for max-finding and the lower bounds for
sorting.  For $k \ge 3$, our lower bound for finding the median (and also for sorting) is
strictly larger than our upper bound for max-finding. For example, for $k=3$ the
lower bound for sorting is $\Omega(n^{5/4})$, whereas max-finding
requires only $O(n^{6/5})$ comparisons.

Note that we achieve $\log\log n$ error for max-finding in $O(n)$
comparisons, and $\log\log n$ error for sorting in $O(n\log^2 n)$ comparisons. Standard methods
using the same number (up to a $\log n$ factor)  of comparisons (e.g. a single-elimination
tournament tree, or Mergesort) can be shown to incur error at least $\log n$.
Also, all the algorithms we give are efficient in that their running
times are of the same order as the number of comparisons they make.

The basis of our deterministic algorithms for both max-finding
and selection are efficient algorithms for
a small value of $k$ ($k=2$). The algorithms for larger error $k$ use several different ways to
partition elements, then recursively use algorithms for smaller error and then
combine results. Achieving nearly tight results for max-finding requires in part
relaxing the problem to that of finding a small
{\em $k$-max-set}, or a set which is guaranteed to contain at least
one element of value at least $x^* - k\delta$, where $x^*$ is the
maximum value of an element (we interchangeably use $x^*$ to refer
to an element of maximum value as well).  It turns out
we can find a $k$-max-set in a fewer number of comparisons than the
lower bound for error-$k$ max-finding algorithms.
Exploiting this allows us to develop an efficient recursive
max-finding algorithm.
We note a similar approach of finding a small set of ``good'' elements
was used by Borgstrom and Kosaraju \cite{BorgKo93} in the context of
noisy binary search.

To obtain our lower bounds for deterministic algorithms we show that the problems we consider have equivalent formulations on tournament graphs in
which the goal is to ensure existence of short (directed) paths from a certain node to other nodes.
 Using a comparison oracle that always prefers elements that had fewer wins in previous
rounds, we obtain bounds on the minimum of edges that are required to
create the paths of desired length. Such bounds are then translated
back into bounds on the number of comparisons required to achieve
specific error guarantees for the problems we consider.

%We summarize the bounds for deterministic algorithms in Table \ref{fig:table}.
\begin{table}
\begin{tabular}{|| c || c | c | c | c ||}
\hline
\hline
\multirow{2}{*}{Task} & \multicolumn{3}{|c|}{Upper bounds} & \multirow{2}{*}{Lower bound} \\
\cline{2-4}
&  $k=2$ & $k$ & $k = \log\log n$ & \\
\hline
Find maximum & $2 n^{3/2}$  & $O(n^{1 + 1/(3 \cdot 2^{k-2} - 1)})$ & $O(n)$ & $\Omega(n^{1 +  1/(2^k - 1)})$ \\
\hline
Select $i$-th & $4 n^{3/2}$  & $O(2^k \cdot n^{1 + 1/2^{k-1}})$ & $O(n \log n)$ & $\Omega(i \cdot \max\{i^{1/(2^k-1)},
n^{1/(2^{k-1})}\})$ \\
\hline
Sort & $4 n^{3/2}$  & $O(4^k \cdot n^{1 + 1/2^{k-1}})$ & $O(n \log^2 n)$ & $\Omega(n^{1  + 1/2^{k-1}})$ \\
\hline
\hline
\end{tabular}
\caption{\small Overview of the bounds for deterministic algorithms with error $k$. In the selection task $i$-th smallest element is chosen (with maximum being $n$-th smallest). \label{fig:table}}
\end{table}

For our randomized max-finding algorithm, we use a type of
tournament with random seeds at each level, in combination with random
sampling at each level of the tournament tree.  By performing a
\roundRobin tournament on the
top few tournament players together with the sampled elements, we
obtain an element of value at least $x^*-3\delta$ with polynomially
small error probability.

\subsection{Related Work}

Handling noise in binary search procedures was first considered by
R\'enyi \cite{Ren61} and by Ulam \cite{ulam}.  An algorithm for solving
Ulam-R\'enyi's game was proposed by Rivest et al.~\cite{KMRSW80}, where an adversarial comparator can err a bounded
number of times. They
gave an algorithm with
query complexity $O(\log n)$ which succeeds if the number of
adversarial errors is constant.

Yao and Yao \cite{yao1985ftn}
introduced the problem of sorting and of finding the maximal element
in a sorting network
when each comparison
gate either returns the right answer or does not work at all.  For
finding the maximal element, they showed that it is necessary and
sufficient to use $(e+1)(n-1)$ comparators when $e$ comparators can be
faulty. Ravikumar, Ganesan and Lakshmanan extended the model to
arbitrary errors, showing that  $O(en)$ comparisons are necessary and
sufficient \cite{ravikumar1987sle}. For sorting, Yao and Yao showed
that $O(n \log n + en)$ gates are sufficient. In a different fault
model, and with a different definition of a successful sort, Finocchi
and Italiano \cite{finocchi2004sas} showed an $O(n \log n)$ time
algorithm resilient to $(n \log n)^{1/3}$ faults. An improved
algorithm handling $(n \log n)^{1/2}$ faults was later given
by Finocchi, Grandoni and Italiano \cite{FinGraIta09}.

In the model where each comparison is incorrect with some
probability $p$, Feige et al.\ \cite{FeigeEtAl94} and Assaf and Upfal
\cite{AU91} give algorithms for
several comparison problems, and \cite{BH08,KK07} give algorithms for
binary search.
We refer the reader interested in the rich history and models
of faulty comparison problems to a survey of Pelc
\cite{Pelc02} and a monograph of Cicalese \cite{Cicalese13}.

We point out that some of the bounds we obtain appear similar to those
known for max-finding, selection, and sorting in parallel in
Valiant's model \cite{Valiant75}. In particular, our bounds for max-finding are close to those obtained by Valiant for the parallel analogue of the problem (with the error used in place of parallel time) \cite{Valiant75}, and our lower bound of $\Omega(n^{1 +
  1/(2^k - 1)})$ for max-finding with error $k$ is identical to a lower (and upper) bound given by H\"{a}ggkvist and Hell
\cite{HH82} for merging two sorted arrays each of length $n$ using a
$k$-round parallel algorithm. Despite these similarities in bounds, our
techniques are different, and
we are not aware of any deep connections. As some evidence of the
difference between the problems we note that for sorting in $k$
parallel rounds it is known that $\Omega(n^{1+1/k})$ comparisons are
required \cite{AA88,BT83,HH81}, whereas in our
model, for constant $k$, we can sort with error $k$ in
$n^{1+1/2^{\Theta(k)}}$ comparisons. For a survey on
parallel sorting algorithms, the reader is
referred to \cite{GGK03}.

The authors have recently learned that finding a king and sorting kings in a tournament graph while minimizing the number of uncovered edges was previously studied by Shen, Sheng and Wu \cite{ShenSW03}. As follows from our results, this problem is equivalent to max-finding and sorting by a deterministic algorithm with error 2. Their upper and lower bounds for this case are (asymptotically) identical to our bounds and are based on essentially the same techniques. Our results can be seen as a generalization of their results to $k$-kings for all $k \geq 2$. We also remark that for randomized algorithms our problem is no longer equivalent to the problem considered in \cite{ShenSW03}.

\section{Notation}
 Throughout this document we let $x^*$ denote
some $x_i$ of the maximum value (if there are several such elements, we
choose one arbitrarily).  Furthermore, we use $x_i$ interchangeably to
refer to the both the $i^{th}$ element and its value, e.g. $x_i > x_j$
should be
interpreted as $\val(x_i) > \val(x_j)$.

We assume $\delta = 1$ without loss of generality, since the problem with
arbitrary $\delta>0$ is equivalent to the problem with $\delta=1$ and
input values $x_i/\delta$. We stress that the algorithm does not know
$\delta$.

We say {\em $x$  defeats $y$} when the comparator
claims that $x$ is larger than $y$ (and we similarly use the phrase
{\em $y$ loses to $x$}). Note that $x$ defeats $y$ implies $x \ge y - 1$. We do not necessarily assume that repeating the same comparison several times would give the same result, and our algorithms do not repeat a comparison twice.
We say $x$ is {\em $k$-greater} than $y$ ($x \ge_k y$) if
$x \ge y - k$.  The term {\em $k$-smaller} is defined analogously. A set of elements $T$ is $k$-greater than a set of elements $S$ if for every $y\in S$ and every $x\in T$, $x \geq_k y$. We say an element is a {\em $k$-max} of a set if it is $k$-greater than all other elements in
the set. If the set is not specified explicitly then we refer to the set of all input elements.
A permutation $x_{\pi(1)},\ldots,x_{\pi(n)}$ is {\em $k$-sorted} if $x_{\pi(i)}
\ge_k x_{\pi(j)}$ for every $i>j$.  A {\em $k$-max-set} is a subset
of all elements which contains at least
one element of value at least $x^* - k$.

All logarithms throughout this document are base-$2$. For simplicity
of presentation, we occasionally omit floors and ceilings and ignore rounding errors when they have an insignificant effect on the bounds.

\section{Max-Finding}\SectionName{maxfind}
In this section we give deterministic and randomized algorithms for
max-finding.

\subsection{Deterministic Algorithms}
We start by showing that
the method of paired comparisons provides an optimal error guarantee,
not just for max-finding, but also for sorting.

\begin{theorem}\TheoremName{roundrobin}
Sorting according to the number of wins in a \roundRobin tournament
has error at most $2$.
\end{theorem}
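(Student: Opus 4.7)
The plan is to prove the contrapositive-style statement: if element $x_a$ has at least as many wins as $x_b$ in the round-robin tournament, then $x_a \geq x_b - 2$. This directly yields that the resulting permutation is $2$-sorted, since adjacent pairs in the sorted order (and hence all ordered pairs) respect this win count comparison.

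First I would assume for contradiction that $x_a$ has at least as many wins as $x_b$ but $x_a < x_b - 2$. Let $W_a$ and $W_b$ be the sets of elements defeated by $x_a$ and $x_b$ respectively. The key structural step is to show $W_a \subseteq W_b$: for any $y \in W_a$, the definition of defeat gives $y \leq x_a + 1$, and combined with $x_a < x_b - 2$ this yields $y < x_b - 1$, i.e., $x_b > y + \delta$. Since the comparator is guaranteed correct when the gap exceeds $\delta = 1$, element $x_b$ must defeat $y$, so $y \in W_b$.

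Second, I would observe that the inclusion $W_a \subseteq W_b$ is strict, because $x_a$ itself lies in $W_b \setminus W_a$: we have $x_b > x_a + 1$, so $x_b$ defeats $x_a$, putting $x_a \in W_b$, and trivially $x_a \notin W_a$. This gives $|W_b| > |W_a|$, contradicting the assumption that $x_a$'s win count is at least $x_b$'s.

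The only subtlety worth double-checking is the boundary in the defeat relation, namely that $x$ defeats $y$ gives only $x \geq y - \delta$ (not $x \geq y$), and correspondingly that the comparator is forced correct only when the gap strictly exceeds $\delta$; this is why the error bound is $2$ rather than $1$ (one $\delta$ slack from the defeat relation applied to $x_a$ and $y$, and one more from the forced-correct threshold used on $x_b$ versus $y$). The argument itself is short, so no step should pose a real obstacle; the main care required is simply in tracking these $\pm \delta$ slacks so that the two losses add up to exactly $2\delta$.
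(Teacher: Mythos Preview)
Your proposal is correct and follows essentially the same argument as the paper: the paper also shows that if $y < x - 2$ then every element defeated by $y$ is defeated by $x$, and $x$ additionally defeats $y$, so $x$ has strictly more wins. The only cosmetic difference is that you phrase it as a contradiction (assuming equal or more wins for the smaller element) whereas the paper states the direct implication, but the substance is identical.
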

\begin{proof}
Let $x,y$ be arbitrary elements with $y$
strictly less than $x - 2$.  For any $z$ that $y$ defeats, $x$ also
defeats $z$.  Furthermore, $x$ defeats $y$, and thus $x$ has strictly
more wins than $y$, implying $y$ is placed lower in the sorted order.
\end{proof}

\begin{theorem}\TheoremName{performance-2}
No deterministic max-finding algorithm has error less than $2$.
\end{theorem}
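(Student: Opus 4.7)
The plan is to give an adversary strategy for $n = 3$ (the general case reduces to this by adding dummy elements of value far below $a,b,c$, with all comparisons involving them answered truthfully, so they do not affect the argument on the three critical elements). Let the three elements be $a, b, c$, and let the adversary commit, before the algorithm runs, to the following cyclic response pattern: $a$ defeats $b$, $b$ defeats $c$, and $c$ defeats $a$. Since the algorithm is deterministic and these responses are fixed as a function of which pair is queried, the algorithm terminates with a specific output $m \in \{a, b, c\}$.

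For each possible value of $m$ I would then exhibit a value assignment consistent with all of the cyclic responses (whether or not they were actually queried in the algorithm's execution) under which $m$ is $2$ below the true maximum. By the cyclic symmetry of the adversary's responses, it suffices to handle $m = a$: put $\val(a) = 0$, $\val(b) = 1$, $\val(c) = 2$. Then $|a-b| = |b-c| = 1 = \delta$, so the responses ``$a$ defeats $b$'' and ``$b$ defeats $c$'' lie inside the $\delta$-imprecision window and are allowable, while $c - a = 2 > \delta$ makes ``$c$ defeats $a$'' the correct (forced) outcome. Consequently the algorithm's output $a$ has error exactly $c - a = 2$; the cases $m = b$ and $m = c$ are obtained by rotating labels on the assignment.

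The only real design choice is the cyclic response pattern itself. A linear pattern such as $a$ defeating $b$ and $b$ defeating $c$ would not suffice, since the adversary would then have no consistent way to punish an algorithm that outputs $a$. Closing the cycle gives the adversary enough slack, after the algorithm commits, to realize each of the three rotations of the assignment $(0,1,2)$ on $(a,b,c)$; verifying this simultaneous consistency is the one step the proof actually hinges on, and it is straightforward because each pair of consecutive values in any rotation is exactly $\delta$ apart and the one pair at distance $2\delta$ is always oriented in the direction the adversary already chose.
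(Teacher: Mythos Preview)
Your argument is correct and is essentially identical to the paper's own proof: fix the cyclic responses $a>b>c>a$ on three elements, let the deterministic algorithm commit to an output, then realize values $0,1,2$ (appropriately rotated) so that the output is $2\delta$ below the true maximum. The paper's version is simply terser; your additional remarks on the reduction from general $n$ and on why a linear pattern would not work are sound but not needed for the core argument.
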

\begin{proof}
Given three elements $a,b,c$, the comparator can claim $a>b>c>a$, making
the elements indistinguishable. Without loss of generality, suppose $A$
outputs $a$.  Then the values could be
$a=0$, $b=1$, $c=2$, implying $A$ has error $2$.
\end{proof}

In \Figure{A2} we give our error 2 algorithm for max-finding.
\begin{figure*}
\begin{center}
\fbox{
\parbox{6in} {
\underline{Algorithm \kmaxfind{2}}:
\texttt{// Returns an element of value at least $x^*-2$.
  The value $s>1$ is a parameter which is by default $\ceil{\sqrt{n}}$
  when not specified.
  }
\begin{enumerate}
\item Label all $x_i$ as candidates.
\item \textbf{while} there are more than $s$ candidate
  elements:
\begin{enumerate}
\item Pick an arbitrary subset of $s$ of the candidate elements and
  play them in a \roundRobin tournament.  Let $x$ be the element with the largest number
  of wins.
\item Compare $x$ against all candidate elements and eliminate all elements that
  lose to $x$.
\end{enumerate}
\item Play the remaining (at most $s$) candidate elements in a
  \roundRobin tournament and return the element with the largest number of wins.
\end{enumerate}
}}
\end{center}
\caption{The algorithm \kmaxfind{2} for finding a $2$-max.}\FigureName{A2}
\end{figure*}

\begin{lemma}\LemmaName{A2}
For every $s \leq n$, the max-finding algorithm \kmaxfind{2} has error $2$ and makes at most $(n-s)s + (n^2-s^2)/(s-1)$
comparisons. In particular, the number of comparisons is at most $2n^{3/2}$
for $s = \ceil{\sqrt{n}}$.
\end{lemma}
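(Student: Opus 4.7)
The plan is to prove the two parts of the lemma --- correctness (error at most $2$) and the comparison count --- separately.

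For correctness, I will show the output $o$ satisfies $o \geq z - 2$ for every input element $z$, and in particular for $z = x^*$. Let $C_{T+1}$ denote the set of candidates surviving the while loop, and let $x_t$ denote the winner chosen in iteration $t$. Case (i): if $z \in C_{T+1}$, then since $o$ is the max-wins element of the final round-robin on $C_{T+1}$, \Theorem{roundrobin} applied to $C_{T+1}$ gives $o \geq z - 2$. Case (ii): if $z \notin C_{T+1}$, then $z$ was eliminated in some iteration $t$ in step 2(b), so $x_t$ defeated $z$, giving $x_t \geq z - 1$. The key observation is that $o \in C_{T+1} \subseteq C_{t+1}$, so $o$ survived iteration $t$; since step 2(b) compared $x_t$ against every element of $C_t$ (including $o$), and $o$ was not eliminated, $o$ must have defeated $x_t$, giving $o \geq x_t - 1$. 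Combining, $o \geq x_t - 1 \geq z - 2$. This yields $o \geq x^* - 2$.

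For the comparison count, I will count per-iteration work. Each iteration $t$ of the while loop uses $\binom{s}{2}$ comparisons for the round-robin on $S_t$ plus at most $|C_t| - s$ additional comparisons to compare $x_t$ against $C_t \setminus S_t$ (the comparisons between $x_t$ and $S_t \setminus \{x_t\}$ are reused from the round-robin). The final round-robin uses at most $\binom{|C_{T+1}|}{2} \leq \binom{s}{2}$ comparisons. To bound the number of iterations $T$: the round-robin winner has at least $(s-1)/2$ wins in $S_t$ (since total wins equal $\binom{s}{2}$, the average is $(s-1)/2$, and max is at least the average), so each iteration eliminates at least $(s-1)/2$ candidates, giving $T \leq 2(n-s)/(s-1)$. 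Hence $T \cdot \binom{s}{2} \leq s(n-s)$, which matches the first term of the claimed bound. The remaining sum $\sum_{t=1}^T (|C_t| - s)$ is over a non-increasing sequence dropping by at least $(s-1)/2$ per iteration, so it is bounded by an arithmetic-series-like quantity that, together with the final round-robin, yields the $(n^2 - s^2)/(s-1)$ term. Plugging $s = \lceil \sqrt n \rceil$ then gives the $2 n^{3/2}$ bound.

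The main obstacle is the correctness argument. A naive invariant tracking $M_t = \max_{y \in C_t} \val(y)$ fails: $M_t$ can decrease by $1$ per iteration in the worst case (when the current maximum is eliminated by a winner of value $M_t - 1$), potentially producing an error of order $T$ rather than $2$. The crucial insight is to step away from tracking candidate values and instead exploit the structure of the revealed graph directly: $o$'s survival across all iterations forces the direct edges $o \to x_t$ for every winner $x_t$, and every eliminated element $z$ has a direct edge $x_t \to z$, so together with the 2-max property within the final round-robin this makes $o$ a $2$-king of the entire revealed graph and hence $2$-close to $x^*$.
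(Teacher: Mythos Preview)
Your proposal is correct. The comparison-count analysis follows the paper's approach essentially verbatim. Your correctness argument, however, takes a genuinely different route. The paper argues value-theoretically: it tracks $x^*$ itself and observes that in the iteration where $x^*$ is eliminated, the winning pivot $x$ satisfies $x \geq x^* - 1$, so every element of value below $x^* - 2$ is simultaneously eliminated; hence from that point onward all survivors---in particular the final output---have value at least $x^* - 2$. You instead argue graph-theoretically: fixing the output $o$, you exhibit for every eliminated element $z$ a directed path $o \to x_t \to z$ of length $2$ in the comparison graph, and invoke \Theorem{roundrobin} only for elements that survive to the final round-robin. The paper's route is slightly shorter, since it needs only $o \geq x^* - 2$ and gets $o \geq z - 2$ for all $z$ for free via $x^* \geq z$. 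Your route has the conceptual advantage of directly certifying that $o$ is a $2$-king of the revealed tournament, which ties in cleanly with the graph interpretation developed in \Section{lb}. One small omission: handle the case $o = x_t$ separately (then $o \geq z - 1$ directly), since $o$ is not compared against itself in Step~2(b).
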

\begin{proof}
We first analyze the number of comparisons.  In the $t^{th}$ iteration,
the number of comparisons is at most $\binom{s}{2} + (n_t - s)$, where $n_t$ is the number of candidate elements in round $t$.  We now bound the number of iterations and $n_t$. In all but the
last iteration, the total number of comparisons made in the \roundRobin tournament is
$\binom{s}{2} = s(s-1)/2$. Thus by an averaging argument, the
element which won the largest number of comparisons won at least
$(s-1)/2$ times.  Thus, at least $(s-1)/2$ elements are eliminated in each
iteration, implying the number of iterations is at most $2(n-s)/(s-1)$ and $n_t \leq n - t(s-1)/2$.
The total number of comparisons is thus at most $$\sum_{t \leq 2 (n-s)/(s-1)} [s(s-1)/2 + n - t(s-1)/2] \leq (n-s) s + (n^2-s^2)/(s-1)\ .$$

We now analyze error in two cases.  The first case is that $x^*$
is never eliminated, and thus
$x^*$ participates in Step 3. \Theorem{roundrobin} then ensures
that the final output is of value at least $x^*-2$.  Otherwise,
consider the iteration when $x^*$ is
eliminated.  In this iteration, it must
be the case that the $x$ chosen in Step 2(b) has $x \ge x^* - 1$, and
thus any element with value less
than $x^*-2$ was also eliminated in this iteration.  In this case all
future iterations only contain elements of value at least $x^*-2$, and so
again the final output has value at least $x^*-2$.
\end{proof}

The key recursion step of our general error max-finding algorithm is the algorithm \onecover (given in \Lemma{A2-cover}) which is based on \kmaxfind{2} and the following lemma.
\begin{lemma}\LemmaName{log-winners}
There is a deterministic algorithm which makes $\binom{n}{2}$
comparisons and outputs a $1$-max-set of size at most $\ceil{\log
  n}$.
\end{lemma}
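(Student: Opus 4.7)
The plan is to perform the round-robin tournament (using $\binom{n}{2}$ comparisons) first, and then, using only the recorded outcomes, extract the set $S$ greedily without spending any further comparisons. Initialize $S \leftarrow \emptyset$ and a working set $R \leftarrow \{x_1,\ldots,x_n\}$. While $R$ is nonempty, look at the subtournament induced by the comparisons among elements of $R$, pick an element $w \in R$ that won the largest number of comparisons against other elements of $R$, add $w$ to $S$, and then remove from $R$ both $w$ itself and every element that $w$ defeated.

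The size bound comes from a standard averaging argument applied to the induced subtournament. In a round-robin on $|R|$ elements there are $\binom{|R|}{2}$ games, so the maximum-win element wins at least $(|R|-1)/2$ of them. Hence each iteration removes $w$ together with at least $(|R|-1)/2$ other elements, leaving at most $\lfloor (|R|-1)/2 \rfloor \leq |R|/2$ elements in $R$. Starting from $|R|=n$, after $\lceil \log n \rceil$ iterations $R$ is empty, so $|S| \leq \lceil \log n \rceil$.

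For correctness, consider the iteration in which $x^*$ is removed from $R$. Either $x^*$ itself is selected as $w$ and placed in $S$, in which case $S$ trivially contains an element of value $\geq x^*-1$; or $x^*$ is removed because it lost to the element $w$ chosen in that iteration. In the latter case the comparator declared $w \geq x^*$, and since a defeat implies $w \geq x^* - 1$ in actual value, the element $w \in S$ witnesses that $S$ is a $1$-max-set.

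The only delicate point is the averaging step: one must be careful that the maximum-win element is computed with respect to the \emph{current} $R$, not the original tournament (otherwise removed elements could inflate the win count and break the halving guarantee). Once one restricts attention to the induced subtournament at each step, the argument goes through with no additional comparisons, since all the necessary pairwise outcomes were already recorded during the initial round-robin.
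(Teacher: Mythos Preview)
Your proof is correct and follows essentially the same approach as the paper: greedily pick the element with the most wins in the induced subtournament on the remaining elements, use averaging to show each step at least halves the remaining set, and observe that $x^*$ is eventually either chosen into $S$ or defeated by some element of $S$. Your write-up is in fact a bit more explicit than the paper's (e.g., spelling out that all comparisons are made up front and that win counts must be recomputed relative to the current $R$), but the argument is the same.
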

\begin{proof}%of}{\Lemma{log-winners}}
We build the output set in a greedy manner. Initialize $S = \emptyset$. At
each step consider the subtournament on the vertices $T$ defined to be those vertices neither
in $S$ nor defeated
by an element of $S$. An averaging
argument shows there exists an element in $T$ which wins at least half
its matches in this subtournament; add this element to $S$. Note in the next step $|T|$ decreases by a factor of at least $2$, so there are at most $\ceil{\log n}$ iterations.
Furthermore, at least one element in the final set $S$ must either have value $x^*$
or have defeated $x^*$, and thus $S$ is a $1$-max-set.
\end{proof}
We now obtain \onecover by setting
$s=\ceil{2\sqrt{n}}$ in \Figure{A2}, then returning the union
of the $x$ that were chosen in any iteration of Step 2(a), in
addition to the output of \Lemma{log-winners} on the elements in the
final tournament in Step 3.

\begin{lemma}\LemmaName{A2-cover}
There is an algorithm \onecover making at most $3 \cdot n^{3/2}$
comparisons which finds a $1$-max-set of size at most
$\sqrt{n}$ (for $n \geq 81$).
\end{lemma}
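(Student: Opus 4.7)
The plan is to combine the execution analysis of \kmaxfind{2} from \Lemma{A2} with the greedy construction from \Lemma{log-winners}. The returned set $R$ is the union of (a) the element $x$ chosen in Step~2(a) of each while-loop iteration, and (b) the $1$-max-set of the final tournament produced by \Lemma{log-winners}. Three claims must be verified: that $R$ is a $1$-max-set, the comparison bound, and the size bound.

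To verify the $1$-max-set property, I mirror the case split used for correctness of \kmaxfind{2}. If $x^*$ survives into Step~3 it appears in the final round-robin, and \Lemma{log-winners} then contributes an element of value at least $x^* - 1$ to $R$. Otherwise $x^*$ is eliminated in Step~2(b) of some iteration; at that moment the $x$ chosen in Step~2(a) of that iteration must have defeated $x^*$, so $x \geq x^* - 1$, and this $x$ is placed into $R$ by construction.

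For the comparison count, note that \Lemma{log-winners} requires no additional comparisons: it only inspects the $\binom{s'}{2}$ outcomes already revealed by the round-robin in Step~3. Thus the total cost equals that of \kmaxfind{2} with $s=\lceil 2\sqrt{n}\rceil$, which by \Lemma{A2} is at most $(n-s)s+(n^2-s^2)/(s-1)$. Direct substitution gives roughly $2n^{3/2}+n^{3/2}/2=(5/2)n^{3/2}$, which falls below $3n^{3/2}$ once the ceiling effects are absorbed.

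The size bound is the main place where care is needed. Each iteration of the while loop contributes one element to $R$, and by the averaging argument in \Lemma{A2} the number of iterations is at most $\lceil 2(n-s)/(s-1)\rceil$; together with the $\lceil\log s\rceil$ elements from \Lemma{log-winners}, this upper bounds $|R|$. With $s=\lceil 2\sqrt{n}\rceil$ the first term is roughly $\sqrt{n}$ minus an additive constant, and the second is $O(\log n)$; verifying that the sum is at most $\sqrt{n}$ for $n\geq 81$ is the main numerical step. The specific factor of $2$ in $s=\lceil 2\sqrt{n}\rceil$ (rather than the default $\lceil\sqrt{n}\rceil$) is chosen precisely so that the iteration count has enough slack to absorb the $\lceil\log s\rceil$ overhead from \Lemma{log-winners}; everything else reduces to direct substitution into \Lemma{A2} and \Lemma{log-winners}.
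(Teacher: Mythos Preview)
Your approach is identical to the paper's: run \kmaxfind{2} with $s=\lceil 2\sqrt{n}\rceil$, collect the pivot $x$ from each iteration of Step~2, and append the output of \Lemma{log-winners} on the survivors of Step~3. The correctness argument and the comparison count are both fine and match the paper.

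The gap is in the size bound. From \Lemma{A2} the number of while-loop iterations is at most $2(n-s)/(s-1)$; with $s=\lceil 2\sqrt{n}\rceil$ this equals $\sqrt{n}-\tfrac{3}{2}+o(1)$, so the slack below $\sqrt{n}$ is only an additive constant, not enough to absorb the $\lceil\log s\rceil=\Theta(\log n)$ contribution from \Lemma{log-winners}. Concretely, at $n=81$ your bound gives $\big\lceil 2(81-18)/17\big\rceil+\lceil\log 18\rceil=8+5=13>9=\sqrt{81}$, and for large $n$ it gives $\sqrt{n}+\Theta(\log n)$. Your assertion that ``the factor of $2$ in $s=\lceil 2\sqrt{n}\rceil$ \ldots\ is chosen precisely so that the iteration count has enough slack to absorb the $\lceil\log s\rceil$ overhead'' is therefore incorrect: the slack you actually obtain is $O(1)$, not $\Omega(\log n)$.

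The paper, by contrast, asserts the sharper iteration bound $\lceil n/s\rceil-1\approx\sqrt{n}/2$, which \emph{does} leave room for the logarithmic term (yielding roughly $\sqrt{n}/2+\tfrac{1}{2}\log n\le\sqrt{n}$). Note, however, that this sharper bound does not follow from the averaging argument of \Lemma{A2} either---each iteration is only guaranteed to eliminate $\lceil(s-1)/2\rceil\approx s/2$ elements, not $s$---so the paper is terse here as well. But regardless of how one reads the paper, the bound $2(n-s)/(s-1)$ that you invoke is too weak to yield the stated lemma.
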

\begin{proof}%of}{\Lemma{A2-cover}}
Run the algorithm \kmaxfind{2} with $s = \ceil{2\sqrt{n}}$.  Return
the set consisting of all elements that
won the \roundRobin tournament in Step 2(b) of \Figure{A2} in at least
one iteration, in addition to a size-$\ceil{\log s}$
set $1$-greater than the candidate elements which were left in Step 3 (using
\Lemma{log-winners}).  The total size of the returned set is thus
$\ceil{n/s} - 1 + \ceil{\log s} \le \ceil{\sqrt{n}/2} + \ceil{\log(\ceil{2\sqrt{n}})}-1
$.  For $n \geq 81$, this is at most $\sqrt{n}$.

To show correctness, consider the element $x^*$ of maximal value.  Either
$x^*$ was eliminated in Step 2(c) of some
  iteration, in which case the element $x$ that eliminated $x^*$ had value at
  least $x^*-1$, or $x^*$ survived until the last iteration, in which case
  the set constructed via \Lemma{log-winners} is a $1$-max-set. Finally, note that the number of comparisons is the same as the number of comparisons used by \kmaxfind{2} (with the same $s$) and therefore is less than $3n^{3/2}$.
\end{proof}

We are now ready to give our main algorithm for finding a $k$-max, shown in \Figure{Ak}.

%\vspace{-.1in}

\begin{figure*}
\begin{center}
\fbox{
\parbox{6in} {
\underline{Algorithm \kmaxfind{k}}:
\texttt{// Returns a $k$-max for $k \geq 3$}
\begin{enumerate}
\item If $n \leq 81$ return the output of \kmaxfind{2} on the input elements.
\item Equipartition the $n$ elements into sets $S_1,\ldots,S_t$ each of size $r = \max\{81, 4 \cdot n^{\frac{8}{3\cdot 2^k-4}}\}$
\item Call \onecover on each set $S_i$ to recover a set $T_i$ 1-greater than $S_i$.
\item Return \kmaxfind{(k-1)}$(\cup_{i=1}^t T_i)$ \texttt{// Recursion stops at \kmaxfind{2}.}
\end{enumerate}
}}
\end{center}
\caption{The algorithm \kmaxfind{k} for finding a $k$-max.}\FigureName{Ak}
\end{figure*}

\begin{theorem}\TheoremName{upper-bound}
For every $3 \le k \le \log\log n$, \kmaxfind{k} uses
$O(n^{1 + 1/(3\cdot 2^{k-2} - 1)})$ comparisons and finds a $k$-max.
%Furthermore, $A_k$ outputs a $(k-1)$-max-set of size $O(n^{2^k/(3\cdot 2^k - 4)})$.
\end{theorem}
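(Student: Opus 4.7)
The plan is induction on $k \geq 2$, with the base case $k=2$ supplied by \Lemma{A2}. The inductive step handles correctness and the comparison count together for \kmaxfind{k} as described in \Figure{Ak}.

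For correctness in the inductive step, I will track the true maximum $x^*$ through the algorithm. Let $S_{i^*}$ be the part of the equipartition that contains $x^*$. By \Lemma{A2-cover}, the cover $T_{i^*}$ returned by \onecover on $S_{i^*}$ is a $1$-max-set of $S_{i^*}$, so it contains some element $y^*$ with $y^* \geq x^* - 1$. In particular $y^* \in \bigcup_i T_i$. By the inductive hypothesis for $k-1$, the recursive call \kmaxfind{(k-1)}$(\bigcup_i T_i)$ returns an element $z$ that is $(k-1)$-greater than every member of $\bigcup_i T_i$, and in particular $z \geq y^* - (k-1) \geq x^* - k$. Since $x^* \geq x_j$ for every input element $x_j$, this gives $z \geq x_j - k$, so $z$ is a $k$-max, as required.

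For the comparison count, write $\alpha_k = 1/(3 \cdot 2^{k-2}-1)$ so that the target bound is $T(n,k) = O(n^{1+\alpha_k})$. With $r = \max\{81, 4 n^{8/(3\cdot 2^k - 4)}\}$ and $t = \lceil n/r \rceil$, Step 3 invokes \onecover on $t$ sets of size $r$, each costing at most $3 r^{3/2}$ comparisons by \Lemma{A2-cover}, for a total of at most $3 n \sqrt{r}$. Each $T_i$ has size at most $\sqrt{r}$, so Step 4 recurses on at most $n/\sqrt{r}$ elements, yielding
\begin{equation*}
T(n,k) \;\leq\; 3 n \sqrt{r} \;+\; T\!\left(n/\sqrt{r},\, k-1\right).
\end{equation*}
I will check that $\sqrt{r} = \Theta(n^{\alpha_k})$ (this is essentially the definition of the exponent $8/(3 \cdot 2^k - 4)$), so the first term is immediately $O(n^{1+\alpha_k})$, and the inductive hypothesis makes the second term $O\bigl(n^{(1-\alpha_k)(1+\alpha_{k-1})}\bigr)$.

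The one thing that needs to be verified carefully is the algebraic identity $(1-\alpha_k)(1+\alpha_{k-1}) = 1+\alpha_k$, equivalently $\alpha_{k-1} = 2\alpha_k/(1-\alpha_k)$. This is a direct computation from the closed form $\alpha_k = 1/(3 \cdot 2^{k-2}-1)$ and is the reason the exponent $8/(3 \cdot 2^k - 4)$ was selected: it makes the two terms of the recurrence balance. The main (mild) obstacle is therefore ensuring this balance is exact, together with checking boundary conditions: that $r \le n$ so the equipartition makes sense (valid for $k\ge 3$ once $n$ is large enough, since $8/(3 \cdot 2^k - 4) \le 2/5$), that the constant factor $4$ inside $r$ makes $n/\sqrt{r}$ genuinely shrink by a polynomial factor so the recursion terminates through Step 1 at the $n \le 81$ base case, and that the hidden constants in the $O(\cdot)$ do not blow up over the at most $\log\log n$ levels of recursion (they grow only geometrically in the additive constants from \Lemma{A2-cover}).
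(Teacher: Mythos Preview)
Your proposal is correct and follows essentially the same approach as the paper: induction on $k$ with the $k=2$ base case from \Lemma{A2}, correctness via the $1$-max-set property of \onecover composed with the $(k-1)$-max guarantee, and the comparison count balanced by the choice of $r$ so that the exponent identity $(1-\alpha_k)(1+\alpha_{k-1}) = 1+\alpha_k$ holds. The only cosmetic difference is that the paper argues correctness by chaining through each $T_i$ (every $x_j$ has some $y\in T_i$ with $y\ge_1 x_j$, and the output is $(k-1)$-greater than that $y$), whereas you track the single element $x^*$ and use $x^*\ge x_j$; both are valid, and the paper also makes the constant explicit ($54$) rather than appealing to geometric non-blowup.
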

\begin{proof}%of}{\Theorem{upper-bound}}
We prove that for any $2 \le k \le \log\log n$, \kmaxfind{k} uses at most $54 \cdot n^{3 \cdot 2^k/(3\cdot 2^k-4)} = 54 \cdot n^{1 + 1/(3\cdot 2^{k-2} - 1)}$ comparisons and finds a $k$-max by induction. First, by \Lemma{A2}, it holds for \kmaxfind{2}.

We now prove the bound on the error. Let $x$ be the element returned by \kmaxfind{k}. By the inductive hypothesis, for every $y \in \cup_{i=1}^t T_i$, $x \geq_{k-1} y$. In addition, for every input element $x_j$ there exists $y \in  T_i$ for some $i$ such that $y \geq_1 x_j$. Therefore, $x \geq_k x_j$ for every $j \in [n]$.

The total number of comparisons used by \kmaxfind{k} can be bounded as follows.
\begin{itemize}
\item if $n \leq 81$ then $4 \cdot n^{3/2} \leq 36 \cdot n$. Otherwise,
\item $t=n/r$ invocations of \onecover. By \Lemma{A2-cover}, this requires at most $3 \cdot r^{3/2} \cdot n/r =  3 \sqrt{r} \cdot n$ comparisons which equals $\max \{27 \cdot n, 6 \cdot n^{\frac{3 \cdot 2^k}{3\cdot 2^k-4}} \}$. Note that $n^{\frac{3 \cdot 2^k}{3\cdot 2^k-4}} \geq n$ and therefore we can use $27 \cdot n^{\frac{3 \cdot 2^k}{3\cdot 2^k-4}}$ as an upper bound.
\item The invocation of \kmaxfind{(k-1)} on $\cup_{i=1}^t T_i$. By \Lemma{A2-cover}, the size of each $T_i$ is at most $\sqrt{r}$. Therefore, $|\cup_{i=1}^t T_i| = \sqrt{r} \cdot n/r = n/\sqrt{r} \leq n^{\frac{3 \cdot 2^k-8}{3\cdot 2^k-4}}/2$. By, the inductive assumption this invocation requires at most $$54 \cdot \left(n^{\frac{3 \cdot 2^k-8}{3\cdot 2^k-4}}/2\right)^{1 + 1/(3\cdot 2^{k-3} - 1)} \leq 27 \cdot n^{\frac{3 \cdot 2^k-8}{3\cdot 2^k-4} \cdot \frac{3 \cdot 2^{k-1}}{3\cdot 2^{k-1}-4}} =  27 \cdot n^{\frac{3 \cdot 2^k}{3\cdot 2^k-4}}\ .$$
\end{itemize}
Altogether the number of comparisons is at most $\max\{36 \cdot n, 54 \cdot n^{\frac{3 \cdot 2^k}{3\cdot 2^k-4}}\}= 54 \cdot n^{\frac{3 \cdot 2^k}{3\cdot 2^k-4}}$.
\end{proof}

\begin{corollary}\CorollaryName{linear-time-max}
There is a max-finding algorithm using $O(n)$ comparisons with error of at most $\log\log n$.
\end{corollary}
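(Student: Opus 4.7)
The plan is to obtain the corollary as a direct consequence of \Theorem{upper-bound} by choosing $k = \lfloor \log\log n \rfloor$. The entire work is already done by the recursive construction of \kmaxfind{k}; all that remains is to verify that the exponent $1 + 1/(3 \cdot 2^{k-2} - 1)$ collapses to $1$ (up to a constant factor on $n$) when $k$ is as large as $\log\log n$.

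First I would invoke \Theorem{upper-bound} with $k = \lfloor \log\log n \rfloor$ to obtain a max-finding algorithm with error $k \le \log\log n$ using at most $O(n^{1 + 1/(3 \cdot 2^{k-2} - 1)})$ comparisons. Then I would compute the exponent: since $2^{k-2} = 2^{\log\log n - 2}/\Theta(1) = \Theta(\log n)$, we have $3 \cdot 2^{k-2} - 1 = \Theta(\log n)$, and hence the exponent is $1 + \Theta(1/\log n)$. Writing $n^{\Theta(1/\log n)} = 2^{\log n \cdot \Theta(1/\log n)} = 2^{\Theta(1)} = O(1)$, it follows that $n^{1 + 1/(3 \cdot 2^{k-2} - 1)} = O(n)$, which gives the claimed bound on the number of comparisons.

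I do not expect any real obstacle here: the only point that requires even a moment's thought is checking that the condition $k \le \log\log n$ in the hypothesis of \Theorem{upper-bound} is satisfied (which it is, by our choice of $k$), and confirming that rounding $\log\log n$ down to an integer does not affect the asymptotic statement, since the error guarantee $k \le \log\log n$ is preserved and the constants absorbed in the $O$ on the comparison count accommodate the at most $2^{4/3}$ factor arising from the $\Theta(1)$ in the exponent computation.
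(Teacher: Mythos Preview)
Your proposal is correct and matches the paper's intended argument: the corollary is stated without proof immediately after \Theorem{upper-bound}, and the only content is exactly the substitution $k = \lfloor \log\log n \rfloor$ together with the observation that $n^{1/\Theta(\log n)} = O(1)$, which you have carried out correctly.
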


\subsection{Randomized Max-Finding}
We now show that randomization can significantly reduce the number of
comparisons required to find an approximate maximum.
Our algorithm operates correctly even
if the adversary can adaptively choose how to err when two elements
are close
(though we stress that the adversary may not change input values over
the course of an execution). In particular, the
classic randomized selection algorithm can take quadratic time in this
adversarial model since
for an input with all equal values, the adversary can
claim that the randomly chosen pivot is smaller than all other
elements. Nevertheless, even in this strong adversarial model, we show
the following.

\begin{theorem} \TheoremName{exists-rand-algo}
For any integer $c\ge 1$, there exists a randomized
algorithm which given any $n$ elements finds a $3$-max of the set with
probability at least $1 - n^{-c}$ using at most $(s-1)n + ((c+1)/(C\ln
2))^2/2\cdot n^{2/3}\ln^4 n = O(n)$ comparisons, where $s,C$ are as
defined in \Figure{tour}.
\end{theorem}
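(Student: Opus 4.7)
The plan is to analyze the algorithm \tour of \Figure{tour}, which I read as a randomized knockout-style tournament combined with random sampling of survivors at each level, followed by a \roundRobin on the union of the sampled elements with the top surviving tournament players. The proof separates cleanly into two components: bounding the number of comparisons, and bounding the failure probability.

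Counting comparisons is the routine piece. The tournament-and-sampling phase uses at most $(s-1)n$ comparisons in aggregate; this is the first term. The concluding \roundRobin is played on a set of size roughly $\frac{c+1}{C\ln 2}\cdot n^{1/3}\log^2 n$, so its cost is the squared-and-halved quantity $\frac{1}{2}\bigl(\frac{c+1}{C\ln 2}\bigr)^2 n^{2/3}\log^4 n$, i.e.\ the second term. Since $s$, $c$, and $C$ are treated as constants, the two terms sum to $O(n)$.

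For the error bound, let $\hat x$ denote the returned element, $x^*$ a true maximum, and $H = \{x_i : x_i \ge x^* - 1\}$ the set of near-maximum elements. The target is $\hat x \ge x^* - 3$. The key structural observation is that any $h \in H$ defeats every $y$ with $y \le x^* - 3$, because the gap strictly exceeds $1$ and the comparator cannot err; hence by \Theorem{roundrobin} a \roundRobin winner on any set containing at least one $h \in H$ is automatically within $2$ of $h$, and so within $3$ of $x^*$. It therefore suffices to show that, except with probability $n^{-c}$, the final \roundRobin set contains at least one member of $H$.

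The main obstacle is precisely this probabilistic claim, and my plan is to split into cases according to $|H|$. When $|H|$ is large, a single random sample of size $\Omega(\log n)$ hits $H$ with probability at least $1 - n^{-(c+1)}$ by a standard Chernoff estimate; the factor $\bigl(\frac{c+1}{C\ln 2}\bigr)^2$ in the round-robin size is exactly what is needed to drive this per-level failure probability below $n^{-(c+1)}$. When $|H|$ is small, it is $x^*$ itself (or whichever member of $H$ currently carries the ``max token'') that is likely to survive round after round: at each level it meets a uniformly random opponent, and it can be knocked out only by another element of $H$, an event of probability at most $|H|/n_t$, where $n_t$ is the number of survivors at level $t$. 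Summing the per-level failure probabilities over the $\Theta(\log n)$ levels and union-bounding across levels, and balancing the two regimes at the crossover $|H| = \Theta(n^{2/3})$ where sampling becomes more efficient than tracking the token, yields total failure probability at most $n^{-c}$, completing the proof.
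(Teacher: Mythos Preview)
Your comparison count and your reduction (``it suffices that the final \roundRobin set $W$ contain some $h$ with $h\ge x^*-1$, since then the \roundRobin winner is $\ge_2 h\ge_3 x^*$'') are correct and match the paper. The probabilistic argument, however, has a genuine gap on two linked points.

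First, the algorithm in \Figure{tour} does \emph{not} pair off elements; at each level it partitions $N_i$ into groups of size $s=15(c+2)+1$ and advances exactly those elements with fewer than $(s-2)/4$ losses. Consequently $x^*$ is eliminated at level $i$ only if at least $\lceil (s-2)/4\rceil\ge s/5$ members of $H$ fall into its group of $s-1$ random companions. When the fraction of $H$-elements in $N_i$ is at most $Cn^{-1/3}$, a hypergeometric estimate gives elimination probability at most $n^{-s/15+1}\le n^{-(c+1)}$. Your ``single random opponent'' model yields only a bound linear in the fraction, namely $|H|/n_t$; summing that over levels with $n_t\approx n/2^t$ gives $\sum_t |H|/n_t\approx |H|$, which is nowhere near $n^{-c}$. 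The large group size $s$ is precisely the mechanism that converts a fraction bound into a polynomially small per-level failure probability, and your sketch omits it.

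Second, the dichotomy cannot be made globally on $|H|$. Even if $|H|<n^{2/3}$, once $n_i$ drops below $n^{2/3}$ the ratio $|H\cap N_i|/n_i$ can be $\Theta(1)$, so no useful bound on $x^*$'s survival is available at that level. The paper instead argues level by level: at each $i$, either $|H\cap N_i|/n_i\ge Cn^{-1/3}$, in which case the sample of size $((c+1)/C)n^{1/3}\ln n$ drawn from $N_i$ hits $H$ with failure probability at most $(1-Cn^{-1/3})^{((c+1)/C)n^{1/3}\ln n}\le n^{-(c+1)}$, or the fraction is below $Cn^{-1/3}$, in which case the group-size-$s$ argument above shows $x^*$ advances with failure probability at most $n^{-(c+1)}$. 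A union bound over the $O(\log n)$ levels then gives overall failure at most $n^{-c}$.
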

Taking $c > 1$, and using the fact that the error of our algorithm can
never be more than $n-1$,
this gives an algorithm which finds an element with
expected value at least $x^* - 4$.  The high-level idea of the
algorithm is as follows. We
randomly equipartition the elements into constant-sized sets.  In each
set we play a \roundRobin tournament and advance everyone who won more
than $3/4$ of its comparisons. As we will prove, the element with the median number of wins can win at most $3/4$
of its comparisons and hence no more than half of the elements advance.  We also randomly sample a
set of elements at each level of the tournament tree.  We show that
either (1) at some round of the tournament there is an abundance of
elements with value at
least $x^*-1$, in which case at least one such element is sampled with
high probability, or (2) $x^*$ advances as one of the top few tournament
elements with high probability.
\Figure{tour} presents the subroutine \tour for the algorithm.

\begin{figure*}
\begin{center}
\fbox{
\parbox{6in} {
\underline{Algorithm \tour}:
\texttt{// For constant
  $c$ and $n$ sufficiently large,
  returns a $1$-max-set with probability at least
  $1-n^{-c}$.}
\begin{enumerate}
\item Initialize $N_0 \leftarrow \{x_1,\ldots,x_n\}$,
    $W\leftarrow\emptyset$, $s\leftarrow 15(c+2) + 1$, $C\leftarrow
    4^4/(5e)^5$, and $i\leftarrow 0$, where $e$ is Euler's number.
\item \label{halt-condition} \textbf{if} $|N_i| \le
  ((c+1)/C)n^{1/3}\ln n$, add elements of $N_i$ to
  $W$ and \textbf{return} $W$.
\item \textbf{else} add a random subset of $((c+1)/C) n^{1/3}\ln n$
  elements from $N_i$ to $W$.
\item \label{loser} Randomly partition the elements in $N_i$ into sets
  of size $s$. In each set, perform a \roundRobin tournament.
\item \label{last-step} Let $N_{i+1}$ contain all elements of $N_i$ which had strictly
  fewer than $(s-2)/4$ losses in their \roundRobin tournament in Step
  \ref{loser}. Increment $i$ and \textbf{goto} Step
  \ref{halt-condition}.
\end{enumerate}
}}
\end{center}
\caption{The algorithm \tour.}\FigureName{tour}
\end{figure*}

We now proceed to the analysis of our algorithm.  First we show that
the element with the median number of wins (or the element of order
$\ceil{n/2}$ when sorted in increasing order by number of wins)
must incur a significant
number of losses.  We in fact show that it must also incur a
significant number of wins, but we will not need this latter fact
until presenting our selection and sorting algorithms.

\begin{lemma}\LemmaName{pivots-exist}
In a \roundRobin tournament on $n$ elements, the element with the
median number of wins has
at least $m$ wins and at least $m$ losses for $m = \ceil{(\ceil{n/2}-1)/2} \geq \ceil{n/5}$.
\end{lemma}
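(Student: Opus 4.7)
My plan is a standard double-counting argument applied twice: once to the ``weaker half'' of the tournament for the win bound, and once to the ``stronger half'' for the loss bound. Let $k = \ceil{n/2}$ and let $x$ denote the element at position $k$ when the players are sorted in increasing order of wins (the median).

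For the lower bound on wins, I would look at the set $A$ consisting of $x$ together with the $k-1$ players that have even fewer wins than $x$, so $|A|=k$ and $x$ is the element of $A$ with the largest win count. Each of the $\binom{k}{2}$ matches played between two members of $A$ contributes exactly one win to some member of $A$, so the total number of wins collected by elements of $A$ is at least $\binom{k}{2}$. Averaging over the $k$ members of $A$, some member has at least $(k-1)/2$ wins, and because $x$ maximizes wins inside $A$, $x$ itself has at least $(k-1)/2$ wins; rounding up yields $W(x) \geq \ceil{(k-1)/2} = m$.

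For the lower bound on losses, I would run the symmetric argument on the set $B$ consisting of $x$ together with the $n-k$ players having the fewest losses (equivalently, the most wins); then $|B| = n-k+1$ and $x$ is now the element of $B$ with the \emph{most} losses. Each of the $\binom{n-k+1}{2}$ matches played inside $B$ contributes a loss to some member of $B$, and averaging gives $L(x) \geq (n-k)/2$. Since $k = \ceil{n/2}$ implies $n \geq 2k-1$, we have $(n-k)/2 \geq (k-1)/2$, and so $L(x) \geq \ceil{(k-1)/2} = m$ as well.

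The secondary inequality $m \geq \ceil{n/5}$ is a direct numerical check of $\ceil{(\ceil{n/2}-1)/2} \geq \ceil{n/5}$ for $n$ in the range of interest. I do not anticipate any real obstacle: the entire argument is a two-line pigeonhole; the only delicate points are making sure the ceilings line up after averaging, and confirming that the loss-side bound (which a priori uses $n-k$ in place of $k-1$) still dominates $m$, which reduces to the trivial bound $n \geq 2\ceil{n/2}-1$.
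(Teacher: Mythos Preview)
Your argument is correct and is essentially the same double-counting as the paper's proof: the paper bounds the total number of wins $\binom{n}{2}$ from above by $\ell\cdot n/2$ (for the bottom half) plus $n/2\cdot(n-1)-\binom{n/2}{2}$ (for the top half), which after cancellation is exactly your observation that the bottom $k$ elements accumulate at least $\binom{k}{2}$ wins among themselves and at most $k\ell$ in total. Your formulation is in fact a bit cleaner, and your loss-side argument via the set $B$ of size $n-k+1$ handles the parity issue more explicitly than the paper's ``the bound on losses is obtained in the same way''.
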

\begin{proof}
Sort the elements by their total number of wins and let $x'$ be the
median according to the number of wins.
Let $\ell$ denote the number of wins of $x'$. Assume that $n$ is
even. Then the total number of wins for all the elements is at most
$n/2 \cdot \ell   +  n/2 \cdot  (n-1) - {n/2  \choose 2}$ since there
are $n/2$ elements with at most $\ell$ wins and the total number of
wins by the $n/2$ elements that have more wins than $x'$ is at most
$n/2 \cdot (n-1) - {n/2  \choose 2}$. But the total number of wins is
exactly ${n \choose 2}$ and therefore we obtain that $\ell \cdot n/2
\geq {n/2  \choose 2}$, or $\ell \geq (n-2)/4$. The bound on the number
of losses is obtained in the same way. If $n$ is odd then this
argument gives a bound of $(\ceil{n/2}-1)/2$.
\end{proof}

\begin{lemma}\LemmaName{TOURNAMENT}
Let $s = 15(c+2)+1, C$ be as in \Figure{tour}, and let $W$ be the final set output in \Figure{tour}. Then for any integer choice of $c\ge 1$:
\begin{itemize}
\item $|W|\le (c+1)/(C\ln 2)\cdot n^{1/3}\ln^2 n$.
\item $W$ is a $1$-max-set with probability at least $1 - n^{-c}$.
\item The algorithm \tour makes at most $(s-1)n$ comparisons. 
\end{itemize}
\end{lemma}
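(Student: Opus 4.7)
The plan is to establish the three bullets in order, saving the probability bound for last.

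For the comparison count, in round $i$ the set $N_i$ is partitioned into $|N_i|/s$ groups of size $s$, each contributing $\binom{s}{2}$ comparisons, for a total of $|N_i|(s-1)/2$. By \Lemma{pivots-exist} applied to each group, the median element by wins has at least $(s-2)/4$ losses, so at least $\lceil s/2\rceil$ of the $s$ elements in the group are eliminated in Step~\ref{last-step}. This gives $|N_{i+1}|\le|N_i|/2$; a geometric sum then yields $\sum_i|N_i|\le 2n$, and hence a total of at most $(s-1)n$ comparisons. The size bound on $W$ follows immediately: the halving bounds the number of rounds by $R\le\log_2 n = \ln n/\ln 2$, and each round adds at most $q=((c+1)/C)\,n^{1/3}\ln n$ elements to $W$, so $|W|\le R\cdot q\le ((c+1)/(C\ln 2))\,n^{1/3}\ln^2 n$.

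For the probability bound, let $A_i=\{y\in N_i:y\ge x^*-1\}$ and $p_i=|A_i|/|N_i|$. The first key observation is that $x^*$ can only be defeated by elements of $A_i$, so the event $E_i$ that $x^*$ is eliminated at round $i$ requires its random size-$s$ group to contain at least $(s-2)/4$ elements of $A_i\setminus\{x^*\}$; a binomial upper bound on the hypergeometric tail then gives $\Pr[E_i\mid p_i]\le(4ep_i)^{(s-2)/4}$. Second, the sample of Step~3 is, conditional on the partition history, a uniform independent subset of $N_i$ of size $q$, so the probability that no sample at any round contains an element of $A_i$ is at most $\prod_i(1-p_i)^q \le e^{-q\sum_i p_i}$. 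The bad event $B$ that $W$ contains no near-max forces both that no sample catches a near-max and that $x^*$ is eliminated at some round (otherwise $x^*\in N_R\subseteq W$).

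I would combine these two estimates by conditioning on the partition (which determines the $p_i$'s and the events $E_i$) and then using $\mathbf{1}_{\bigcup_i E_i}\le\sum_i\mathbf{1}_{E_i}$ together with $e^{-q\sum_j p_j}\le e^{-qp_i}$, arriving at
$$
\Pr[B] \;\le\; \sum_i \mathbb{E}\!\left[\mathbf{1}_{E_i}\,e^{-qp_i}\right] \;\le\; R\cdot \max_{p\ge 0}(4ep)^{(s-2)/4}e^{-qp} \;=\; R\cdot\left(\frac{s-2}{qe}\right)^{\!(s-2)/4},
$$
where the maximum is attained at $p=(s-2)/(4q)$. Plugging in $s=15(c+2)+1$, $C=4^4/(5e)^5$, and $q=((c+1)/C)\,n^{1/3}\ln n$ is then a short calculation that gives the right-hand side at most $n^{-c}$; the specific constants in \Figure{tour} are calibrated precisely to make this work.

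The hard part will be the trade-off captured by the product $(4ep)^{(s-2)/4}e^{-qp}$: when $p_i$ is small the sample is unlikely to catch a near-max but $x^*$ is also unlikely to be eliminated, and when $p_i$ is large the sample essentially always succeeds. Balancing these contributions and choosing $s$ and $C$ so that the worst-case value of the product meets the target $n^{-c}/R$ is where the main work lies.
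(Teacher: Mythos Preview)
Your treatment of the first and third bullets is essentially the same as the paper's. For the second bullet you take a genuinely different route. The paper fixes a threshold $C n^{-1/3}$ on the fraction $p_i$ of near-maxes in $N_i$ and argues a dichotomy round by round: if $p_i$ is above the threshold the sample catches a near-max with probability $\ge 1-n^{-(c+1)}$, and if it is below the threshold then $x^*$ survives with probability $\ge 1-n^{-(c+1)}$; a union bound over the $O(\log n)$ rounds finishes. You instead bound, for each round, the joint probability that $x^*$ is eliminated \emph{and} the round-$i$ sample misses, as a single function of $p_i$, and then maximize over $p_i$. This is a cleaner and more principled way to discover the right balance (the paper's threshold is essentially your optimizer $p^*$ in disguise), at the cost of having to be a little more explicit about conditional independence of the sample and the partition given $N_i$. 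Both approaches yield the same conclusion with the stated constants.

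Two small points. First, your ``binomial upper bound on the hypergeometric tail'' is really just a union bound: the probability that a fixed set of $t$ of the $s-1$ slots in $x^*$'s group are all near-maxes is at most $p_i^t$, and there are $\binom{s-1}{t}$ such sets; no hypergeometric--binomial comparison is needed. Second, your optimization has a slip: the maximizer $p=(s-2)/(4q)$ gives $(4ep)^{(s-2)/4}e^{-qp}=\bigl(e(s-2)/q\bigr)^{(s-2)/4}\cdot e^{-(s-2)/4}=\bigl((s-2)/q\bigr)^{(s-2)/4}$, not $\bigl((s-2)/(qe)\bigr)^{(s-2)/4}$. This does not matter for the conclusion, since $(s-2)/q=\Theta(n^{-1/3}/\log n)$ and $(s-2)/4>c$, so the bound is still well below $n^{-c}/R$.
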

\begin{proof}
By \Lemma{pivots-exist} the
element with the median number of wins has at least $(s-2)/4$ losses
during each \roundRobin tournament in Step \ref{loser}, and thus the
fraction of elements that survive from one iteration to the next in
Step \ref{last-step} is at most $1/2$.
Therefore, the number of iterations is at most $\ceil{\log_2 n}$.
In each
iteration we sample $((c+1)/C)\cdot n^{1/3}\ln n$ elements, and thus the size of
the output $W$ is as claimed.  Also, the total number of
comparisons is at most $\sum_{i=0}^\infty \binom{s}{2}\cdot (n/2^i)/s
= (s-1)n$, where $n/2^i$ is an upper bound on $|N_i|$.

We now show that $W$ is a $1$-max-set with probability at least
$1-n^{-c}$. We say that an iteration $i$ is {\em good} if either $x^*$
advances to $N_{i+1}$, or $W$ contains a $1$-max
at the end of round $i$. We then show that for any
iteration $i$, conditioned on the event that iterations $1,\ldots,i-1$
were good we have that $i$ is good with probability
$1-1/n^{c+1}$. The lemma would then follow by a union bound over all
iterations $i$.

Now consider an iteration $i$ where $1,\ldots,i-1$ were good.
Then either $W$ already contains a $1$-max, in which case $i$ is good,
or $W$ does not contain a $1$-max but $x^*\in N_i$. Let us focus on
this latter case.  Define $n_i = |N_i|$. Let
$\mathcal{Q}_i$ be
the event that the number of elements in $N_i$ with value at least
$x^*-1$ is at least $C\alpha n_i$ for $\alpha = n^{-1/3}$. We show a
dichotomy: either
$\mathcal{Q}_i$ holds, in which case $1$-max is sampled into $W$ with
probability $1 - 1/n^{c+1}$, or $\mathcal{Q}_i$ does not hold, in
which case $x^*\in N_{i+1}$ with probability $1 - n^{c+1}$.

To show the dichotomy, let us first assume $\mathcal{Q}_i$
holds. Then, the probability we do not sample a $1$-max into $W$ is at
most
$$ \left(1 - \frac{C \alpha n_i}{n_i}\right)^{((c+1)/C) n^{1/3}\ln n} \le
\left(1 - \frac{C}{n^{1/3}}\right)^{((c+1)/C) n^{1/3}\ln n} \le
e^{-(c+1)\ln n} = \frac{1}{n^{c+1}} .$$

Now let us assume that $\mathcal{Q}_i$ does not hold. Then for $x^*$
to not advance to $N_{i+1}$ we must have that at least $s/5$
$1$-maxes were placed into the same set as $x^*$ in iteration
$i$. Using that $(a/b)^b \le \binom{a}{b} \le (ea/b)^b$, this happens
with probability at most
\begin{align*}
\frac{\binom{C \alpha n_i}{s/5}\cdot \binom{n_i}{4s/5}}{\binom{n_i}{s-1}} &<
n_i \cdot \frac{\binom{C \alpha n_i}{s/5}\cdot
  \binom{n_i}{4s/5}}{\binom{n_i}{s}}\\
&{}\le n_i \cdot \frac{e^s\cdot (5 C \alpha n_i/s)^{s/5}\cdot (5
  C n_i/(4s))^{4s/5}}{(n_i/s)^s}\\
&{}\le n_i \cdot (e\cdot C^{1/5}\cdot 5^{1/5}\cdot (5/4)^{4/5} \cdot
n^{-1/15})^s\\
&{}= n^{-s/15 + 1} ,
\end{align*}
which is at most $n^{-(c+1)}$ by our choice of $s$ and $C$.
\end{proof}

\begin{proofof}{\Theorem{exists-rand-algo}}
Run the algorithm in \Figure{tour}. By \Lemma{TOURNAMENT}, the output
$W$ is $1$-max-set with probability at least $1-n^{-c}$.
Conditioned on this event, a $2$-max of $W$ is thus a $3$-max of the
entire original input. A $2$-max of $W$ can be found via a \roundRobin
tournament by \Theorem{roundrobin} using $\binom{|W|}{2} < |W|^2/2$
comparisons. The total number of comparisons is thus the sum of
comparisons made in \Figure{tour}, and in the final round robin
tournament, which gives the bound claimed in the theorem statement.
\end{proofof}

\section{Sorting and Selection}\SectionName{sorting}
We now consider the problems of sorting and selection. We first present an algorithm \ksort{2} which sorts with error 2 using $O(n^{3/2})$ comparisons (and, in particular can be used for selection with error 2). We then describe the selection and sorting algorithms for general error $k$. We start by formally defining what is meant by selecting an element of certain order with error.

\begin{definition}
Element $x_j$ in the set $X = \{x_1,\ldots,x_n\}$ is of {\em $k$-order $i$} if
there exists a partition $S_1,S_2$ of $X\setminus\{x_j\}$ such that $|S_1|=i-1$,
and $S_1 \cup \{x_j\} \le_k S_2 \cup \{x_j\}$. A {\em $k$-median} is an element of $k$-order $\ceil{n/2}$.
\end{definition}

Our error 2 sorting algorithm is based on modifying \kmaxfind{2} so that the $x$ found in Step 2(a) of \Figure{A2}
is used as a pivot.  We then compare this $x$ against all elements and pivot into two sets, recursively sort each, then concatenate. More formally, the algorithm \ksort{2} works as follows. If $n\leq 64$ then we just perform a \roundRobin tournament on the elements. Otherwise let $s = s(n) = \sqrt{2n}$. We choose some $s$ elements and perform a \roundRobin tournament on them. Now let $x$ be an element with the median number of wins. We compare $x$ to all elements and let $S_1$
be the set of elements that lost to $x$ and $S_2$ be the set
of all elements that defeated $x$. Then we recursively sort $S_1$ and $S_2$ then output sorted $S_1$, then $x$, and then sorted $S_2$. Note that any $(y,y')\in(S_1\cup\{x\})\times S_2$ satisfies $y'\ge_2 y$ since
$y'\ge_1 x$ and $x\ge_1 y$. Correctness follows by induction from \Theorem{roundrobin}. We prove the following bound on the number of comparisons used by \ksort{2}.

\begin{theorem}\TheoremName{B2}
There is a deterministic sorting algorithm \ksort{2} with error $2$ that
requires at most $4\cdot n^{3/2}$ comparisons.
\end{theorem}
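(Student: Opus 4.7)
My plan is to establish both the 2-sorting property and the $4 n^{3/2}$ comparison bound by strong induction on $n$. For correctness, the base case $n \leq 64$ is covered by \Theorem{roundrobin}. For the inductive step, note that every $y' \in S_2$ defeated $x$, so $y' \geq x - 1$, and $x$ defeated every $y \in S_1$, so $x \geq y - 1$; composing these gives $y' \geq y - 2$ for all $y \in S_1 \cup \{x\}$ and $y' \in S_2$. The inductive hypothesis yields that $S_1$ and $S_2$ are 2-sorted after the recursive calls, and concatenating them around $x$ produces a 2-sorted permutation of the input.

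For the comparison bound, let $T(n)$ denote the worst-case number of comparisons used. The base case gives $T(n) \leq \binom{n}{2} \leq n^2/2 \leq 4 n^{3/2}$ (using $\sqrt{n} \leq 8$). For the inductive step with $n > 64$ and $s = \lceil \sqrt{2n}\rceil$, the current-level work is $\binom{s}{2} + (n - 1) \leq s^2/2 + n \leq 2n$. Applying \Lemma{pivots-exist} to the $s$-element subtournament, the pivot $x$ has at least $m := \lceil (s-2)/4\rceil$ wins and at least $m$ losses, so $|S_1|, |S_2| \geq m$ with $|S_1| + |S_2| = n - 1$. The inductive hypothesis gives $T(|S_1|) + T(|S_2|) \leq 4 (|S_1|^{3/2} + |S_2|^{3/2})$, and by convexity of $t \mapsto t^{3/2}$ the right-hand side is maximized at the extreme split $|S_1| = m,\, |S_2| = n - 1 - m$. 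Closing the induction thus reduces to the numerical inequality
$$ 2n + 4\, m^{3/2} + 4 (n - 1 - m)^{3/2} \;\leq\; 4 n^{3/2}. $$

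The hard part will be establishing this inequality, which is tight up to lower-order terms. A first-order estimate $(n - a)^{3/2} \leq n^{3/2} - a\sqrt{n}$ from convexity yields a linear-in-$n$ savings of only $(\sqrt{2}/4)\, n \approx 0.35\, n$, too little to cover the $2n$ per-level cost. The right tool is the second-order Taylor estimate
$$ (n - a)^{3/2} \;\leq\; n^{3/2} - \tfrac{3}{2}\, a \sqrt{n} + \tfrac{3\sqrt{2}}{8}\, a^2 / \sqrt{n} \qquad \text{for } 0 \leq a \leq n/2, $$
which follows from the Lagrange form of the remainder together with the bound $(1-c)^{-1/2} \leq \sqrt{2}$ on the relevant range. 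Plugging $a = 1 + m = \Theta(\sqrt{n})$, the leading linear-in-$n$ contribution becomes a savings of $(3\sqrt{2}/8)\, n \approx 0.53\, n$, which strictly exceeds $n/2$. The choice $s = \sqrt{2n}$ is precisely tuned so this margin closes; the residual $4 m^{3/2} = O(n^{3/4})$ and the $O(\sqrt{n})$ Taylor remainder are both dominated by the small excess $(3\sqrt{2}/8 - 1/2)\, n$ once $n$ is above a constant threshold, which can be absorbed into the base case if necessary.
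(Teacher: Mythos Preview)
Your approach is the same as the paper's: strong induction on $n$, convexity of $t\mapsto t^{3/2}$ to reduce to the extreme split $(m,\,n-1-m)$, then a Taylor-type upper bound on $(n-a)^{3/2}$ to close the recurrence. In fact your second-order estimate is the right call: the paper asserts the cleaner bound $(n-t-1)^{3/2}\le n^{3/2}-\tfrac32\sqrt{n}\,(t+1)$, but this inequality is false for $t+1<\tfrac34 n$ (convexity of $x\mapsto x^{3/2}$ gives the reverse direction), so the explicit Lagrange remainder $\tfrac{3\sqrt2}{8}\,a^2/\sqrt n$ that you keep is genuinely needed.

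The one loose end is your last sentence. The threshold \emph{cannot} be absorbed into the base case: round-robin satisfies $\binom{n}{2}\le 4n^{3/2}$ only for $n\le 65$, so the inductive step must hold from $n=65$ on. Against the $\approx 0.03\,n$ margin you isolate, the residual $m^{3/2}\approx 0.21\,n^{3/4}$ alone would force a crossover near $n\approx 2400$. The fix is not to discard lower-order terms: use the tighter per-level count $\binom{s}{2}+(n-s)$ (the pivot is already compared to the other $s-1$ round-robin elements), and keep the $\Theta(\sqrt n)$ pieces of both the level cost and the linear term $-\tfrac32(m+1)\sqrt n$. Then the total surplus over $4n^{3/2}$ has the shape $-c_1 n - c_2\sqrt n + n^{3/4}+O(1)$ with $c_1\approx 0.12$ and $c_2\approx 4.9$, and AM--GM gives $c_1 n + c_2\sqrt n \ge 2\sqrt{c_1c_2}\,n^{3/4}>n^{3/4}$ for all $n$, exactly the device the paper uses in its final line.
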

\begin{proof}
Let $g(n)$ be the worst-case number of comparisons required by \ksort{2} that we described above.
We claim $g(n) \leq 4 \cdot n^{3/2}$. For $n\le 64$ we perform a \roundRobin tournament and thus $g(n) = \binom{n}{2}$ in this case, which is less than $4\cdot n^{3/2}$ for $n \le 64$. Now we consider $n > 64$. Let $t = |S_2|$. The number of comparisons used in a recursive call is at most $s(s-1)/2 + n - s + g(n-t-1) + g(t)$. By our inductive assumption this is at most $s(s-1)/2 + n - s + 4 ((n-t-1)^{3/2} + t^{3/2})$. Without loss of generality we can assume that $t \leq (n-1)/2$ and then obtain that this function is maximized when $t$ is the smallest possible (via a straightforward analysis of the derivative). By \Lemma{pivots-exist}, $x$ has at least $(s-2)/4$ wins and $(s-2)/4$ losses and therefore $t \geq (s-2)/4$.
Now we observe that $(n-t-1)^{3/2} \leq n^{3/2} - \frac{3}{2}\sqrt{n} (t+1)$ (to verify it is sufficient to square both sides and use the fact that $t\leq n$).
Hence
\begin{align*}
g(n) & \leq s^2/2 + n - 3s/2 + 4 (n^{3/2} - \frac{3}{2}\sqrt{n} (t+1) + t^{3/2})\\& \leq
4 \cdot n^{3/2} + 2n - 2\sqrt{2n} - \frac{3 \sqrt{n}(\sqrt{2n}+2)}{2} + n^{3/4} \\ & =
4 \cdot n^{3/2} - (3/\sqrt{2}-2)n - (3 + 3/\sqrt{2})\sqrt{n} + n^{3/4} < 4 \cdot n^{3/2}.
\end{align*}
The last line of the equation follows from the following application of the inequality of arithmetic and geometric means $$(3/\sqrt{2}-2)n + (3 + 3/\sqrt{2})\sqrt{n} \geq 2 \sqrt{(3/\sqrt{2}-2)(3 + 3/\sqrt{2})} n^{3/4} > n^{3/4}\ .$$
This finishes the proof of the induction step.
\end{proof}

At a high level our algorithm for $k$-order selection is similar to
the classical selection algorithm of Blum et al.~\cite{BFPRT73}, in that in each step we try to find a pivot that allows us to recurse on a problem of geometrically decreasing
size.  In our scenario though, a good pivot must have an additional property which we now define.
\begin{definition}
\DefinitionName{k-pivot}
Element $x_j$ in the set $X = \{x_1,\ldots,x_n\}$ is a $k$-pivot for $m$ elements if
there exist disjoint sets $S_w \subset X\setminus\{x_j\}$ ({\em winning set}) and $S_l \subset X\setminus\{x_j\}$ ({\em losing set}), such that $|S_w|=|S_l|=m$, $x_{\ell} \le_k x_j$ for all $\ell \in S_l$, and $x_{\ell} \ge_k x_j$ for all $\ell\in S_w$.
\end{definition}

In order to use an element as a pivot in our algorithm it must be a $(k-1)$-pivot. We construct such a pivot via a recursive algorithm that given $n$ elements and a number $k$ constructs a $k$-pivot for at least $n/(5\cdot 2^{k-1})$ elements. For $k=1$, \Lemma{pivots-exist} gives the desired algorithm. The general algorithm is effectively a recursive application of \Lemma{pivots-exist} and is described below.
\begin{figure*}
\begin{center}
\fbox{
\parbox{6in}{
\underline{Algorithm \kpivot{k}}:
\texttt{// Given a set of $n \geq 3$ elements $X$, returns a $k$-pivot for $m \geq n/(5 \cdot 2^{k-1})$ elements and corresponding losing and winning sets (as described in \Definition{k-pivot}).}
\begin{enumerate}
\item \textbf{if} $k = 1$ or $n \leq 215$ \texttt{// Base case}
\begin{enumerate}
\item Perform a \roundRobin tournament on $X$.
\item Let $y$ be the element with the median number of wins.
\item Let $m$ be the smaller of the number of wins and the number of losses of $y$.
\item Let $S_l$ be a set of $m$ elements that lost to $y$ and $S_w$ be a set of $m$ elements that defeated $y$.
\end{enumerate}
\item \textbf{else}
\begin{enumerate}
\item Set $s \varass \ceil{3 \cdot n^{\frac{1}{2^{k}-1}}}$ and set $s' \varass \ceil{(\ceil{s/2}-1)/2}$.
\item Initialize $T \varass X$, $i \varass 0$.
\item \textbf{while} $|T| \geq s$
\begin{enumerate}
\item $i \varass i+1$.
\item Let $X_i$ be a set of $s$ arbitrary elements from $T$.
\item Perform a \roundRobin tournament on $X_i$.
\item Set $y_i$ to be the element with the median number of wins.
\item Set $S_{i,l}$ to be a set of any $s'$ elements that lost to $y$ and $S_{i,w}$ a set of $s'$ elements that defeated $y_i$.
\item Update $T \varass T\setminus(S_{i,l} \cup S_{i,w} \cup \{y_i\})$.
\end{enumerate}
\item Set $t\varass i$ and $Y \varass \{y_1,y_2,\ldots,y_t\}$.
\item Recursively call \kpivot{(k-1)} on $Y$ and let $y$, $S'_{l}$ and $S'_{w}$ be the $(k-1)$-pivot and the sets returned.
\item Set $S_l \varass S'_l \cup \bigcup_{y_i \in S'_l} S_{i,l}$ and $S_w \varass S'_w \cup \bigcup_{y_i \in S'_w} S_{i,w}$.
\end{enumerate}
\item Return $y$, $S_l$ and $S_w$.
\end{enumerate}
}
}
\end{center}
\caption{The algorithm \kpivot{k} for finding a $k$-pivot for at least $n/(5 \cdot 2^{k-1})$ elements}\FigureName{Dk}
\end{figure*}
Our algorithm for selecting an element of $k$-order $i$ is in \Figure{Ck}.

\begin{figure*}
\begin{center}
\fbox{
\parbox{6.1in} {
\underline{Algorithm \kselect{k}}:
\texttt{// Given a set $X$ of $n \geq 3$ elements, returns an element of $k$-order $i$ in $X$}
\begin{enumerate}
     \item \textbf{if} $k\le 2$ or $n \leq 8$, sort elements using \ksort{2} then return the
       element with index $i$.
     \item \textbf{else}
     \begin{enumerate}
         \item Set $s \varass \floor{n^{1-2^{1-k}}}$ and let $T$ be a set of any $s$ elements from $X$.
         \item Call \kpivot{(k-1)} on $T$ and let $y$, $S_l$ and $S_w$ be the pivot and the sets returned.
         \item Compare $y$ with each of the other $n-1$ elements.
         \item \textbf{if} $y$ defeats at least $(n-1)/2$ elements
         \begin{enumerate}
             \item Set $X_1$ to be the set of all elements that $y$ defeats and are not in $S_w$.
             \item Set $X_2 = X \setminus (X_1\cup \{y\})$.
         \end{enumerate}
         \item \textbf{else} // the symmetric case
         \begin{enumerate}
             \item Set $X_2$ to be the set of all elements that $y$ lost to and are not in $S_l$.
             \item Set $X_1 = X \setminus (X_2\cup \{y\})$.
         \end{enumerate}
         \item \textbf{if} $|X_1| = i-1$ return $y$.
         \item \textbf{else if} $i \le |X_1|$, return the output of \kselect{k} for an element of
               $k$-order $i$ in $X_1$.
         \item \textbf{else} return the output of \kselect{k} for an element of $k$-order $(i -  |X_1| - 1)$ in $X_2$.
    \end{enumerate}
\end{enumerate}
}}
\end{center}
\caption{The algorithm \kselect{k}.}\FigureName{Ck}
\end{figure*}

%\vspace{-.1in}

%\vspace{-.05in}
We claim that algorithm \kselect{k} finds an element of $k$-order $i$ using at most $O(2^k \cdot n^{1+1/2^{k-1}})$ comparisons. To prove this, we first analyze the algorithm \kpivot{k}.
\begin{lemma}\LemmaName{k-pivot}
For any $1\le k\le \log\log n$, given $n$ elements the
deterministic algorithm \kpivot{k} (see \Figure{Dk}) finds a $k$-pivot for $m \geq n/(5 \cdot 2^{k-1})$ elements and corresponding losing set $S_l$ and winning set $S_w$ using at most $9 \cdot n^{1+1/(2^k-1)} + c_n$ comparisons, where $c_n = \min\{{n \choose 2}, {215 \choose 2}\}$.
\end{lemma}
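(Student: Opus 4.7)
The plan is to prove both properties by induction on $k$, with the base case covering $k=1$ and $n \leq 215$ simultaneously.

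For the \textbf{base case} (either $k=1$ or $n \leq 215$), the algorithm performs a round-robin on all of $X$ and takes $y$ to be the median element. By \Lemma{pivots-exist} applied to this tournament, $y$ has at least $\ceil{n/5}$ wins and at least $\ceil{n/5}$ losses, so $m \geq n/5 \geq n/(5\cdot 2^{k-1})$. A $1$-pivot is automatically a $k$-pivot for every $k \geq 1$ (since $\leq_1$ implies $\leq_k$), so correctness holds. The comparison count is exactly $\binom{n}{2} \leq c_n \leq 9\cdot n^{1+1/(2^k-1)} + c_n$.

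For the \textbf{inductive step} ($k \geq 2$, $n > 215$), the correctness argument goes as follows. By induction, the recursive call \kpivot{(k-1)} on $Y$ (with $|Y|=t$) returns a $(k-1)$-pivot $y$ with witnessing sets $S'_l, S'_w \subseteq Y$ each of size at least $t/(5\cdot 2^{k-2})$. For every $y_i \in S'_l$ we have $y_i \leq_{k-1} y$, and every element of $S_{i,l}$ lost to $y_i$, hence is $\leq_1 y_i \leq_{k-1} y$, giving $\leq_k y$; the symmetric statement holds for $S_w$. Since the sets $\{y_i\}$, $\{S_{i,l}\}$, $\{S_{i,w}\}$ are constructed to be pairwise disjoint, we get $|S_l| = |S'_l|(1+s')$ and similarly $|S_w| = |S'_w|(1+s')$. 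The main arithmetic step is to show $(1+s')/(2s'+1) \geq 1/2$ (which follows from the ceilings in the definition of $s'$) and to use $t(2s'+1) \geq n - s + 1$ (from the termination condition $|T|<s$), giving $t(1+s') \geq (n-s+1)/2$; one then checks that $(n-s+1)/2 \cdot 1/(5\cdot 2^{k-2}) \geq n/(5\cdot 2^{k-1})$ for $n > 215$ and $s = \ceil{3n^{1/(2^k-1)}}$, which reduces to $n \geq 2(s-1)$, easily verifiable from the given parameter range.

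For the \textbf{comparison count} in the inductive step, I would bound the three cost components separately. The round-robin tournaments inside the while loop contribute $t\cdot \binom{s}{2}$ comparisons. Using $2s'+1 \geq s/2$ (from the ceilings in $s'$) gives $t \leq 2n/s$, so $t\binom{s}{2} = (ts)(s-1)/2 \leq n(s-1) \leq n \cdot 3n^{1/(2^k-1)} = 3n^{1+1/(2^k-1)}$, since $s - 1 \leq 3n^{1/(2^k-1)}$ by definition of $s$. The recursive \kpivot{(k-1)} call on $Y$ costs, by induction, at most $9t^{1+1/(2^{k-1}-1)} + c_t$. Substituting $t \leq 2n/s \leq (2/3)n^{(2^k-2)/(2^k-1)}$ and computing the exponent
\[
\frac{2^k-2}{2^k-1}\cdot \frac{2^{k-1}}{2^{k-1}-1} = \frac{2^k}{2^k-1} = 1 + \frac{1}{2^k-1},
\]
together with $(2/3)^{2^{k-1}/(2^{k-1}-1)} \leq 2/3$ (since $2/3 < 1$ and the exponent is $\geq 1$), yields a recursive cost of at most $6n^{1+1/(2^k-1)} + c_t$. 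Summing gives $9n^{1+1/(2^k-1)} + c_t \leq 9n^{1+1/(2^k-1)} + c_n$, since $c_t \leq \binom{215}{2} = c_n$ once $n > 215$.

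The \textbf{main obstacle} I anticipate is the correctness bookkeeping for $|S_l|, |S_w|$: showing that the ratio $(1+s')/(2s'+1)$ behaves like $1/2$ with the specific ceilings used, and then checking that the lower bound $(n-s+1)(1+s')/[(2s'+1)\cdot 5\cdot 2^{k-2}]$ actually exceeds $n/(5\cdot 2^{k-1})$ for all $n > 215$ and all admissible $k$. The comparison-count recursion, by contrast, is a routine though tight calculation that works precisely because the constant $3$ in $s$ and the factor $2/3$ that drops out of the recursion combine to give $3+6=9$, matching the claimed constant.
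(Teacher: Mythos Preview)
Your overall structure and the comparison-count analysis match the paper and are correct. However, there is a genuine gap in your lower bound on $m$ in the inductive step. After using $(1+s')/(2s'+1)\ge 1/2$ you obtain $m\ge t(1+s')/(5\cdot 2^{k-2})\ge (n-s+1)/(5\cdot 2^{k-1})$, and you then claim this is at least $n/(5\cdot 2^{k-1})$, saying it ``reduces to $n\ge 2(s-1)$.'' But the inequality $(n-s+1)/(5\cdot 2^{k-1})\ge n/(5\cdot 2^{k-1})$ actually reduces to $s\le 1$, which is false; your stated reduction is an algebra slip (it would only be right if the factor $1/2$ were not already absorbed). The crude bound $(1+s')/(2s'+1)\ge 1/2$ discards precisely the surplus needed.

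The paper avoids this by keeping the exact expression
\[
m \;\ge\; \frac{(n-s+1)(s'+1)}{(2s'+1)\cdot 5\cdot 2^{k-2}}
\;=\; \frac{n}{5\cdot 2^{k-1}} \;+\; \frac{n-2(s-1)(s'+1)}{5\cdot 2^{k-1}(2s'+1)},
\]
so the real condition to check is $n\ge 2(s-1)(s'+1)$ (note the extra factor $s'+1$ compared to your claim). The paper verifies this in two cases: for $s\le 19$ it follows directly from $n\ge 216$, while for $s\ge 20$ one combines $s'\le (s+1)/4$ with $n\ge ((s-1)/3)^3$ (the latter from $s\le\lceil 3n^{1/3}\rceil$ when $k\ge 2$). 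A separate minor point: in your base case the inequality $\binom{n}{2}\le c_n$ fails when $k=1$ and $n>215$; there one needs $\binom{n}{2}\le 9n^2\le 9n^{1+1/(2^k-1)}+c_n$ instead.
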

\begin{proof}
We prove that for any $1 \le k \le \log\log n$, \kpivot{k} uses at most $9 \cdot n^{1+1/(2^k-1)} + c_n$ comparisons and finds a $k$-pivot for $m \geq n/(5 \cdot 2^{k-1})$ elements by induction. First, if $k=1$ or $n \leq 215$ then by \Lemma{pivots-exist}, it holds for \kpivot{1} since $m \geq n/5$ and the total number of comparisons $n(n-1)/2 \leq 9 \cdot n^{1+1/(2^k-1)} + c_n$.

We now prove the bound on the error in the general case when $k \geq 2$ and $n \geq 216$. Let $y$ be the element returned by \kpivot{k}. By the inductive hypothesis, for every $v \in S_l$, if $v \in S'_l$ then $v \leq_{k-1} y$. Otherwise, when $v \in S_{i,l}$ for some $y_i \in S'_l$ we get that $v \leq_1 y_i$ and $y_i \leq_{k-1} y$. This implies that in both cases $v \leq_k y$. Similarly, for every $v \in S_w$, $v \geq_k y$.

Next we prove the bound on $m$. The algorithm performs a \roundRobin on $s$ elements until at most $s-1$ elements are left and each such round eliminates exactly $2s'+1$ elements. Therefore the number of rounds $t$ is at least $\geq \ceil{(n-s+1)/(2s'+1)}$. By the inductive hypothesis, $$m' = |S'_{l}|= |S'_{w}| \geq t/(5\cdot 2^{k-2}) \geq\frac{n-s+1} {5\cdot 2^{k-2} \cdot (2s'+1)} \ .$$
Now, $$m = m' \cdot (s'+1) \geq \frac{n-s+1}{2s'+1} \cdot \frac{s'+1}{5 \cdot 2^{k-2}} = \frac{n}{5 \cdot 2^{k-1}} + \frac{n - 2(s-1)(s'+1)}{5 \cdot 2^{k-1} \cdot (2s'+1)}\ .$$ If $s \leq 19$ then $n \geq 216$ implies that $n - 2(s-1)(s'+1) \geq 0$. Otherwise (for $s \geq 20$), $k \geq 2$ implies that $s \leq \ceil{3 \cdot n^{1/3}}$ and therefore, $n \geq ((s-1)/3)^3$. By definition of $s'$, $s'= \ceil{(\ceil{s/2}-1)/2} \leq (s+1)/4$. Therefore, for $s\geq 20$, $$n - 2(s-1)(s'+1) \geq \left(\frac{s-1}{3}\right)^3 - \frac{s^2+4s+5}{2}> 0 .$$
Hence $m = |S_{l}|= |S_{w}| \geq \frac{n}{5 \cdot 2^{k-1}}$.

%the element with the median number of wins in the \roundRobin tournament will be the element that has one win and one loss. Therefore for every $i$, $X_i = \{y_i\} \cup S_{i,l} \cup S_{i,w}$ and, consequently, $t = \floor{n/3}$. By the inductive hypothesis, $$m' = |S'_{l}|= |S'_{w}| \geq t/(5\cdot 2^{k-2}) = \frac{\floor{n/3}} {5\cdot 2^{k-2}}\ .$$ Hence when $n > 6$, $$m = |S_{l}|= |S_{w}| = 2 m' \geq  4 \cdot \floor{\frac{n}{3}} \cdot  \frac{1}{5 \cdot 2^{k-1}} \geq \frac{n}{5 \cdot 2^{k-1}}. $$}

Finally, we prove the bound on the total number of comparisons used by \kpivot{k} when $k \geq 2$ and $n \geq 216$ as follows:
\begin{itemize}
\item $t \leq \floor{n/(2s'+1)}$ invocations of \roundRobin on $s$ elements. By definition $2s'+1 \geq s/2$ and therefore this step requires at most $\floor{2n/s} s (s-1)/2 \leq n (s-1) \leq 3 \cdot n^{1+1/(2^k-1)}$ comparisons.
\item The invocation of \kpivot{(k-1)} on $t\leq \floor{2n/s}$ elements. By our inductive hypothesis, this requires $9 t^{1+1/(2^{k-1}-1)} + c_t \leq 9(2 \cdot n^{1-1/(2^k-1)}/3)^{1+1/(2^{k-1}-1)} + c_n \leq 6 \cdot n^{1+1/(2^k-1)} + c_n$.
\end{itemize}
Altogether the number of comparisons is at most $9 \cdot n^{1+1/(2^k-1)} +c_n$, as was claimed.
\end{proof}

We are now ready to state and prove our bounds for \kselect{k} formally.
\begin{theorem}\TheoremName{k-select}
For any $2\le k\le \log\log n$, there
is a deterministic algorithm \kselect{k} which, given $n$ elements and $i\in [n]$, finds an element of $k$-order $i$ using at most $25 \cdot 2^{k-1} n^{1+2^{1-k}} + 5 \cdot 2^{2k-3}  n^{2^{1-k}} c_n$ comparisons, where $c_n = \min\{{n \choose 2}, {215 \choose 2}\}$ (as defined in \Lemma{k-pivot}).
\end{theorem}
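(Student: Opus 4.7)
My plan is to proceed by induction on $n$ with $k$ fixed, following the recursive structure of \kselect{k} in \Figure{Ck}. In the base case where $k \le 2$ or $n \le 8$, \ksort{2} is invoked and \Theorem{B2} gives at most $4 n^{3/2}$ comparisons; this is dominated by $A n^{1+\alpha}$ with $A = 25\cdot 2^{k-1}$ and $\alpha = 2^{1-k}$, since $A \ge 50$ and $n^{1+\alpha} \ge n$ yield $A n^{1+\alpha} \ge 50 n \ge 4 n^{3/2}$ for $n \le 156$. For the recursive case I will handle correctness and the comparison count separately.

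For correctness, the central fact is the structural guarantee from \Lemma{k-pivot} on $T$ of size $s = \lfloor n^{1-2^{1-k}}\rfloor$: the pivot $y$ comes with disjoint sets $S_l, S_w$ of size $m \ge s/(5\cdot 2^{k-2})$ satisfying $S_l \le_{k-1} y \le_{k-1} S_w$. I will show that in the branch where $y$ defeats at least $(n-1)/2$ elements, $X_1$ (the defeated elements minus $S_w$) satisfies $X_1 \le_1 y$, while $X_2$ satisfies $y \le_{k-1} X_2$, because every element of $X_2$ is either in $S_w$ (hence $\ge_{k-1} y$) or was not defeated by $y$ (hence $\ge_1 y$); the inclusion $S_w \subseteq X_2$, enforced by the definition of $X_1$, is the key point. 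Chaining $a \le y+1 \le b+(k-1)+1 = b+k$ for $a \in X_1$, $b \in X_2$ then yields $X_1 \le_k X_2$. When $|X_1| = i-1$, returning $y$ with witness partition $(X_1, X_2)$ finishes the case; otherwise, the element $z$ returned by the recursive call on $X_1$ (resp.\ $X_2$) comes with a witness partition $(T_1, T_2)$ of that subset minus $\{z\}$, and appending $\{y\}\cup X_2$ (resp.\ $X_1\cup\{y\}$) to the larger side produces a valid witness partition of $X\setminus\{z\}$ of the required size, with all cross-comparisons validated by the three inequalities above.

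For the complexity, the inductive hypothesis is $g_k(n) \le A n^{1+\alpha} + B n^\alpha c_n$ with $A = 25\cdot 2^{k-1}$ and $B = 5\cdot 2^{2k-3}$. I will decompose the top-level cost into: the call to \kpivot{(k-1)} on $T$, for which \Lemma{k-pivot} gives at most $9 s^{1+1/(2^{k-1}-1)} + c_s$; substituting $s \le n^{1-\alpha}$ collapses the exponent to exactly $1$, yielding at most $9n + c_n$; plus the $n-1$ comparisons of $y$ against the rest; plus one recursive call on a set of size $n_1$. To control $n_1$, the inclusion $S_w \subseteq X_2$ forces $|X_1| \le n-1-m$, while the branch condition forces $|X_1| \ge (n-1)/2 - m$, hence $|X_2| \le (n-1)/2 + m \le n-m$ once $m \le (n+1)/4$; a short check using $m \le (s-1)/2$ confirms this in the entire recursive regime $n \ge 9$ with $2 \le k \le \log\log n$.

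Setting $\beta = 1/(5\cdot 2^{k-2})$ so that $m \ge \beta n^{1-\alpha}$, I will close the induction by bounding $(n-m)^{1+\alpha} \le n^{1+\alpha}(1-m/n) = n^{1+\alpha} - m n^\alpha$ and using the concavity estimate $(n-m)^\alpha \le n^\alpha - \alpha m n^{\alpha-1}$ (valid since $0 < \alpha < 1$), together with $c_{n-m} \le c_n$. This reduces the inductive step to the two inequalities $A \beta \ge 10$ and $B \alpha \beta \ge 1$; a direct computation gives $10/\beta = 50\cdot 2^{k-2} = 25 \cdot 2^{k-1}$ and $1/(\alpha\beta) = 2^{k-1}\cdot 5\cdot 2^{k-2} = 5\cdot 2^{2k-3}$, so the stated constants are precisely those dictated by the induction. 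The hard part I expect is the argument that both $|X_1|$ and $|X_2|$ are at most $n-m$; it rests essentially on the algorithmic choice of excluding $S_w$ (respectively $S_l$) from the ``defeated'' (respectively ``lost'') side, without which the recursion could stall on the larger side.
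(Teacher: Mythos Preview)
Your proposal is correct and follows essentially the same approach as the paper's proof: induction on $n$ with $k$ fixed, the base case via \ksort{2}, correctness by chaining the $(k-1)$-pivot guarantee with the single comparison against $y$, and the comparison bound by showing the recursive instance has size at most roughly $n - \beta n^{1-\alpha}$ so that the saved term $A\beta n + B\alpha\beta\,c_n$ exactly pays for the $10n + c_n$ overhead. The paper organizes the size bookkeeping slightly differently---it sets $\alpha_{\mathrm{paper}}=(|X_2|+1)/n$ (after WLOG $|X_1|\ge|X_2|$) and uses $\min(|X_1|,|X_2|)\ge m$ to get $\max\le n-1-m$, which neatly absorbs the $-1$ coming from the floor in $s=\lfloor n^{1-\alpha}\rfloor$; your version instead bounds $|X_1|$ and $|X_2|$ separately by $n-m$, which is a hair looser but leads to the same constants. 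One small wording slip: when you glue the recursive witness partition, ``the larger side'' should be ``the upper side'' (i.e., $T_2$ when recursing on $X_1$, and $T_1$ when recursing on $X_2$); size has nothing to do with which side receives $\{y\}\cup X_2$ or $X_1\cup\{y\}$.
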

\begin{proof}
We prove the claim by induction on $n$. For $n \leq 8$ we use \ksort{2} which sorts with error 2. An element $i$ in such a sorting has 2-order $i$. Also, according to \Theorem{B2}, the algorithm uses at most $4n^{3/2} \leq 25 \cdot 2^{k-1} \cdot n^{1+2^{1-k}}$ comparisons.

We now consider the general case when $n \geq 9$ and $k \geq 3$. If $y$ defeats at least $(n-1)/2$ elements then for every element $z_1 \in X_1$, $z_1 \leq_1 y$ and, by the properties of the $(k-1)$-pivot for every element $z_2$ in $X_2$, $y  \leq_{k-1} z_2$. In particular, $z_1 \leq_k z_2$. Now, if $|X_1| = i-1$ then $X_1$ and $X_2$ form  a partition of $X\setminus\{y\}$ showing that $y$ is an element of $k$-order $i$. If $|X_1| > i-1$ then let $y'$ be the $k$-order $i$ element in $X_1$ returned by the recursive call to \kselect{k}. There exists a partition of $X_1$ into sets $S'_1$ and $S'_2$ showing that $y'$ is an element of $k$-order $i$ in $X_1$. We set $S_1 = S'_1$ and $S_2 = S'_2 \cup \{y\} \cup X_2$. First, by the definition of $y'$, for every $z_1 \in S_1 \cup \{y'\}$, and $z'_2 \in S'_2 \cup \{y'\}$ we have $z_1 \leq_k z'_2$. Now, by our choice of $y$ we know that for $z_1 \in S_1 \cup \{y'\}$ and $z_2 \in X_2 \cup \{y\}$ we have $z_1 \leq_k z_2$. Hence $S_1 \cup \{y'\} \leq_k S_2 \cup \{y'\}$, that is, $y'$ is an element of $k$-order $i$. The case when $|X_1| < i-1$ and the symmetric case when $y$ lost to at least $(n-1)/2$ are analogous. This proves that \kselect{k} returns an element of $k$-order $i$.

Finally, in the general case, the number of comparisons \kselect{k} uses is as follows.
\begin{itemize}
\item Call to \kpivot{(k-1)} on $s = \floor{n^{1-2^{1-k}}} \geq \floor{\sqrt{n}} \geq 3$ elements. \Lemma{k-pivot} implies that this step uses at most $9 s^{1+1/(2^{k-1}-1)} + c_s \leq 9 n +c_n$ comparisons.
\item Comparison of the pivot with all the elements uses at most $n-1$ comparisons.
\item The invocation of \kselect{k} on one of $X_1$ and $X_2$. By the definition of pivot, $|S_w| = |S_l| \leq (s-1)/2 \leq (n-2)/4$. Hence we observe that if $y$ defeated at least $(n-1)/2$  elements then $|X_1| \geq (n-1)/2 - |S_w| \geq (n-2)/4 \geq |S_l|$. And by the definition of $X_2$, $|X_2| \geq |S_w|$. Similarly, if $y$ lost to at least $(n-1)/2$ elements then $|X_1| \geq |S_l|$ and $|X_2| \geq |S_w|$. Assume, without loss of generality, that $|X_1| \geq |X_2|$ and let $\alpha = (|X_2|+1)/n$.
     By the properties of the $(k-1)$-pivot, $$|X_2| \geq s/(5\cdot 2^{k-2}) \geq \floor{n^{1-2^{1-k}}}/(5\cdot 2^{k-2}) \geq n^{1-2^{1-k}}/(5\cdot 2^{k-2}) - 1, $$ or $\alpha \geq n^{-2^{1-k}}/(5\cdot 2^{k-2})$.
     The number of comparisons is maximized when \kselect{k} is executed on $X_1$ which has size $n-|X_2|-1 = n-\alpha n$ and, by our inductive hypothesis, this step can be done using $N$ comparisons for \alequn{N & \leq  25 \cdot 2^{k-1} \left(n - \alpha n\right)^{1+2^{1-k}} + 5 \cdot 2^{2k-3} \cdot \left(n - \alpha n\right)^{2^{1-k}} \cdot c_n \\
    \leq & 25\cdot 2^{k-1} \cdot n^{1+2^{1-k}} \left(1-\alpha\right)^{1+2^{1-k}} + 5 \cdot 2^{2k-3} \cdot n^{2^{1-k}} \cdot \left(1 - \alpha\right)^{2^{1-k}} \cdot c_n \\
    \leq & 25 \cdot 2^{k-1} \cdot n^{1+2^{1-k}} (1-\alpha) + 5 \cdot 2^{2k-3} \cdot n^{2^{1-k}} \cdot (1 - 2^{1-k} \cdot \alpha) \cdot c_n\\
     = & 25  \cdot 2^{k-1} \cdot n^{1+2^{1-k}} + 5 \cdot 2^{2k-3} \cdot n^{2^{1-k}} c_n - 10 \cdot n - c_n}
\end{itemize}
Therefore altogether the number of comparisons is at most $25 \cdot 2^{k-1} \cdot n^{1+2^{1-k}} + 5 \cdot 2^{2k-3} \cdot n^{2^{1-k}} \cdot c_n$.
\end{proof}

We now show that with a small change our selection algorithm can be used to produce a complete sorting with error $k$. Namely, instead of running recursively on one of the subsets $X_1$ and $X_2$, we recursively sort each partition, then concatenate the sorted results in the order \ksort{k}$(X_1), y,$\ksort{k}$(X_2)$. As expected, in the base case (when $n \leq 8$) we just output the result of \ksort{2}. We claim that the resulting algorithm, to which we refer to as \ksort{k}, has the following bounds on the number of comparisons.

\begin{theorem}\TheoremName{k-sort}
For any $2\le k\le \log\log n$, there
is a deterministic algorithm \ksort{k} which given a set $X$ of $n$ elements, sorts the elements of $X$ with error $k$ and uses at most $7 \cdot 2^{2k} \cdot n^{1+2^{1-k}} + n \cdot c_n$ comparisons, where $c_n = \min\{{n \choose 2}, {215 \choose 2}\}$ (as defined in \Lemma{k-pivot}).
\end{theorem}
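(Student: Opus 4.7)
My plan is to prove both correctness and the comparison count by induction on $n$, closely mirroring the proof of \Theorem{k-select} but with two recursive calls instead of one. Correctness is the easier part. The base case ($n \leq 8$ or $k = 2$) invokes \ksort{2}, which by \Theorem{B2} produces a $2$-sorted (hence $k$-sorted) order. In the inductive step the pivot $y$ returned by \kpivot{(k-1)} is a $(k-1)$-pivot by \Lemma{k-pivot}, and the case analysis already used in the proof of \Theorem{k-select} yields $z_1 \leq_1 y$ for every $z_1 \in X_1$ and $z_2 \geq_{k-1} y$ for every $z_2 \in X_2$ (taking the case $y$ defeats at least $(n-1)/2$ elements; the other case is symmetric). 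Combining these bounds gives $z_2 \geq_k z_1$, so the concatenation \ksort{k}$(X_1), y, $\ksort{k}$(X_2)$ is $k$-sorted across every pair: within each of $X_1$ and $X_2$ by the inductive hypothesis, and across the pivot by the pivot bounds themselves.

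For the comparison count I would show inductively that $T(n) \leq A \cdot n^{1+\beta} + n \cdot c_n$ with $A = 7 \cdot 2^{2k}$ and $\beta = 2^{1-k}$. The inductive step contributes $9 s^{1+1/(2^{k-1}-1)} + c_s \leq 9n + c_n$ comparisons from \kpivot{(k-1)} (using $s = \floor{n^{1-\beta}}$, so that the exponent cancellation to $n$ is identical to the one already done in the proof of \Theorem{k-select}), plus $n-1$ comparisons for pivoting, plus $T(|X_1|) + T(|X_2|)$ for the two recursive sorts. Substituting the inductive hypothesis and using monotonicity of $c$ to absorb $|X_1| c_{|X_1|} + |X_2| c_{|X_2|} \leq (n-1) c_n$ into the target bound, closing the induction reduces to verifying
\begin{equation*}
A \cdot \bigl(n^{1+\beta} - |X_1|^{1+\beta} - |X_2|^{1+\beta}\bigr) \;\geq\; 10 n .
\end{equation*}

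This convexity estimate is the heart of the argument and where I expect the main difficulty. Taking $|X_2| \leq |X_1|$ without loss of generality and setting $\alpha = |X_2|/n \leq 1/2$, I would use $(1-\alpha)^{1+\beta} \leq 1 - \alpha$ for $\alpha \in [0,1]$ together with the bound $(\alpha n)^{1+\beta} \leq 2^{-\beta} \alpha \cdot n^{1+\beta}$ (which follows from $\alpha n \leq n/2$) to get $|X_1|^{1+\beta} + |X_2|^{1+\beta} \leq n^{1+\beta} \bigl[1 - \alpha (1-2^{-\beta})\bigr]$. The lower bound $\alpha n \geq n^{1-\beta}/(5 \cdot 2^{k-2})$, coming from the $(k-1)$-pivot guarantee exactly as in the proof of \Theorem{k-select}, combined with the elementary inequality $1 - 2^{-\beta} \geq \beta/2$ for $\beta \in (0,1]$, then yields the slack $A n^{1+\beta} \alpha (1-2^{-\beta}) \geq A n / (5 \cdot 2^{2k-3})$, which exceeds $10 n$ precisely because the ratio $A / (5 \cdot 2^{2k-3}) = 7 \cdot 8 / 5 = 11.2$ is chosen large enough. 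The main obstacle is squeezing the constant $A = 7 \cdot 2^{2k}$ through this inequality uniformly in $k$, while controlling the several off-by-ones (the $-1$ in $|X_1|+|X_2|=n-1$, the rounding in $s$, and the transition to the base case at $n \leq 8$); the modest extra factor in $A$ relative to the minimum $5 \cdot 2^{2k-3}$ must be enough to absorb these, and verifying this carefully is the only nontrivial part remaining.
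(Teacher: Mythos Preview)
Your overall strategy---induction on $n$, same base case, same use of \kpivot{(k-1)}, same reduction of the recursion to a convexity estimate on $(1-\alpha)^{1+\beta}+\alpha^{1+\beta}$---matches the paper exactly. The gap is in the convexity step itself, and it is not just an off-by-one nuisance: your chain of inequalities does not close with $A=7\cdot 2^{2k}$.

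Concretely, from $\alpha \ge n^{-\beta}/(5\cdot 2^{k-2})$ and $1-2^{-\beta}\ge \beta/2=2^{-k}$ you get
\[
A\,n^{1+\beta}\,\alpha\,(1-2^{-\beta})\ \ge\ \frac{A\,n}{5\cdot 2^{k-2}\cdot 2^{k}}\ =\ \frac{A\,n}{5\cdot 2^{2k-2}},
\]
not $A n/(5\cdot 2^{2k-3})$; the ratio is $7\cdot 4/5=5.6$, not $11.2$, so the slack is only $5.6n<10n$ and the induction fails. The loss is structural: by using $(1-\alpha)^{1+\beta}\le 1-\alpha$ you throw away the factor $(1-\alpha)^\beta$ entirely and try to recover the $\beta$-dependent savings from the tiny term $\alpha^{1+\beta}$ via $\alpha^\beta\le 2^{-\beta}$. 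That costs you essentially a factor of $2$.

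The paper instead keeps the savings in the large term: by Bernoulli for exponent $\beta\in(0,1)$, $(1-\alpha)^\beta\le 1-\beta\alpha$, so
\[
(1-\alpha)^{1+\beta}+\alpha^{1+\beta}\ \le\ (1-\alpha)(1-\beta\alpha)+\alpha\ =\ 1-(1-\alpha)\beta\alpha,
\]
and since one may evaluate at the \emph{lower} endpoint $\alpha=\alpha_{\min}=n^{-\beta}/(5\cdot 2^{k-2})\le 1/10$ (the expression is decreasing on $[0,1/2]$), this gives $\le 1-\tfrac{9}{10}\beta\alpha$. The resulting slack is $\tfrac{9}{10}\cdot 2^{1-k}\cdot A n/(5\cdot 2^{k-2})=\tfrac{504}{50}\,n>10n$, which just barely closes the induction with the stated constant. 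Swap your estimate for this one and the rest of your plan goes through.
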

\begin{proof}
As before, we prove the claim by induction on $n$. For $n \leq 8$ we use \ksort{2} which produces sorting with error 2 and gives a suitable bound on the number of comparisons.

We now consider the general case ($n \geq 9$ and $k \geq 3$). As in the case of selection, it is easy to see that
the algorithm sorts $X$ with error $k$. However the bound on the number of comparisons is different from the one we gave for \kselect{k} since now the algorithm is called recursively on both $X_1$ and $X_2$. As before, the call to \kpivot{(k-1)} on $s = \floor{n^{1-2^{1-k}}}$ elements uses at most $9 s^{1+1/(2^{k-1}-1)} + c_n \leq 9 n +c_n$ comparisons and there are at most $n-1$ comparisons of the pivot with all the other elements. We again assume, without loss of generality, that $|X_1| \geq |X_2|$ and denote $\alpha = (|X_2|+1)/n$. By our inductive hypothesis, the number of comparisons $N$ used for the recursive calls is bounded by
\alequ{N &\leq 7 \cdot 2^{2k} \left((n - \alpha n)^{1+2^{1-k}} + (\alpha n)^{1+2^{1-k}}\right) + c_n ((n - \alpha n) + (\alpha n -1)) \nonumber \\
     & = 7 \cdot 2^{2k} \cdot n^{1+2^{1-k}} \left((1-\alpha)^{1+2^{1-k}} + \alpha^{1+2^{1-k}}\right) + (n-1)c_n \ . \label{eq:bound-k-sort}}
     By differentiating the expression $(1-\alpha)^{1+2^{1-k}} + \alpha^{1+2^{1-k}}$ as a function of $\alpha$ we obtain that it is monotonically decreasing in the interval $[0,1/2]$ and hence its minimum is attained when $\alpha$ is the smallest. Therefore we can use the lower bound $\alpha \geq n^{-2^{1-k}}/(5\cdot 2^{k-2})$ to conclude that
     \alequn{ (1-\alpha)^{1+2^{1-k}} + \alpha^{1+2^{1-k}} &\leq  (1-\alpha)(1-\alpha)^{2^{1-k}}  + \alpha \leq (1-\alpha)(1-2^{1-k} \cdot \alpha)  + \alpha \\
      & = 1 - (1-\alpha)\cdot 2^{1-k} \alpha \leq 1 - \frac{9}{10}\cdot 2^{1-k} n^{-2^{1-k}}/(5\cdot 2^{k-2}) \\
      & = 1 - \frac{9}{25} \cdot 2^{2-2k} n^{-2^{1-k}}.}
      By substituting this bound into equation (\ref{eq:bound-k-sort}) we obtain that
      $$N \leq 7 \cdot 2^{2k} \cdot n^{1+2^{1-k}} \left(1 - \frac{9}{25} 2^{2-2k} n^{-2^{1-k}}\right) + (n-1) c_n = 7 \cdot 2^{2k} \cdot n^{1+2^{1-k}} + n \cdot c_n - 10n - c_n\ .$$
Therefore altogether the number of comparisons used by \ksort{k} is at most $7 \cdot 2^{2k} \cdot n^{1+2^{1-k}} + n \cdot c_n$.
\end{proof}
An immediate corollary of \Theorem{k-sort} is that it is possible to achieve error of no more than $\log\log n$ in close to optimal time.
\begin{corollary}\CorollaryName{loglog-sort}
There is a sorting algorithm using $O(n\log^2 n)$ comparisons with error of at most $\log\log n$.
\end{corollary}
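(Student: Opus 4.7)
The plan is to simply instantiate \Theorem{k-sort} at $k = \lceil \log\log n \rceil$ and compute the resulting bound. First I would check that the hypothesis $2 \le k \le \log\log n$ of \Theorem{k-sort} is satisfied for $n$ sufficiently large (handling small $n$ with a direct round-robin, which still costs $O(n \log^2 n)$). Running \ksort{k} with this $k$ produces a sorting of error at most $\lceil \log\log n \rceil$, which is the error bound claimed in the corollary.

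Next I would turn the three factors in the bound $7 \cdot 2^{2k} \cdot n^{1+2^{1-k}} + n \cdot c_n$ into their asymptotic sizes. The key observation is that with $k = \lceil \log\log n \rceil$ we have $2^{1-k} \le 2/\log n$, so
\[
n^{1+2^{1-k}} \;=\; n \cdot n^{2^{1-k}} \;\le\; n \cdot n^{2/\log n} \;=\; n \cdot 2^{2} \;=\; 4n.
\]
Meanwhile $2^{2k} = (2^k)^2 = O(\log^2 n)$, so $2^{2k} \cdot n^{1+2^{1-k}} = O(n \log^2 n)$. For the additive term, $c_n = \min\{\binom{n}{2}, \binom{215}{2}\}$ is bounded by the absolute constant $\binom{215}{2}$, giving $n \cdot c_n = O(n)$.

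Combining these two contributions yields a total comparison count of $O(n \log^2 n) + O(n) = O(n \log^2 n)$, as required. There is no real obstacle here; the corollary is a direct numerical specialization of \Theorem{k-sort}, and the only thing to be careful about is making sure $k$ is an integer within the allowed range and that the $n^{2^{1-k}}$ factor collapses to a constant under the chosen $k$.
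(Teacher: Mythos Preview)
Your approach is exactly what the paper intends: the corollary is stated as an immediate consequence of \Theorem{k-sort}, and you correctly identify that plugging in $k\approx\log\log n$ collapses $n^{2^{1-k}}$ to a constant and leaves $2^{2k}=O(\log^2 n)$.

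There is one small slip. Taking $k=\lceil\log\log n\rceil$ generally gives $k>\log\log n$, so the hypothesis $k\le\log\log n$ of \Theorem{k-sort} fails, and the resulting error $\lceil\log\log n\rceil$ can exceed the $\log\log n$ claimed in the corollary. The fix is simply to take $k=\lfloor\log\log n\rfloor$ instead: then $2\le k\le\log\log n$ holds for $n\ge 16$, the error is $\lfloor\log\log n\rfloor\le\log\log n$, and since $k\ge\log\log n-1$ we get $2^{1-k}\le 4/\log n$, hence $n^{2^{1-k}}\le 16$, and the same arithmetic yields $O(n\log^2 n)$ comparisons.
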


\section{Lower Bounds}\SectionName{lb}
Here we prove lower bounds against deterministic max-finding, sorting,
and selection algorithms. In particular, we show that
\Theorem{upper-bound}, \Theorem{k-select} and \Theorem{k-sort} achieve almost optimal
trade-off between error and number of comparisons.

Our proof is based on the analysis of the comparison graph, or the directed graph on all elements in which an edge $(x_i,x_j)$ is present whenever a comparison between $x_i$ and $x_j$ was made and its imprecise outcome was $``x_i \geq x_j"$. We show that one can only conclude that $x_i \geq_k x_j$ if this graph has a path of length at most $k$ from $x_i$ to $x_j$. The existence of short paths from an element to numerous other elements (such as when the element is a $k$-max) is only possible when there are many vertices with large out-degree. Following this intuition we define an oracle that when comparing two elements always responds that the one with the smaller out-degree is larger than the one with the larger out-degree. Such an oracle will ensure that a large number of comparisons needs to be made in order to obtain a sufficient number of vertices with high out-degree. We also show that the responses of the oracle can be seen as derived from actual values defined using the resulting comparison graph.

\begin{lemma}
\LemmaName{lb-lemma}
Suppose a deterministic algorithm $A$ upon given $n$ elements
guarantees that after
$m$ comparisons it can list $r$ elements,
each of which
is guaranteed to be $k$-greater than at least $q$ elements.
%, $k\leq\log n$.
Then $m = \Omega(\max\{q^{1 + 1/(2^k-1)},
q\cdot r^{1/(2^{k-1})}\})$.
\end{lemma}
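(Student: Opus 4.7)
The plan is to reformulate the problem as a game on a directed ``comparison graph'' $G$ whose edges record the oracle's responses---$x \to y$ means the oracle declared $x$ the winner against $y$---and then exploit an out-degree-based adversary. A key preliminary is the graph-theoretic characterization: an algorithm can guarantee $x \ge_k y$ after $m$ queries if and only if $G$ contains a directed path of length at most $k$ from $x$ to $y$. The ``if'' direction is by chaining $x_{i-1} \ge x_i - 1$ along the path; for ``only if'', if no such path exists, the value assignment $\val(u) = -\min(d_G(u,y), k+1)$ respects every revealed edge (because $d_G(w,y) \ge d_G(u,y) - 1$ for each edge $u \to w$, with the cap $k+1$ handling boundary cases) while satisfying $\val(y) - \val(x) = k+1 > k$. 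Under this characterization, the algorithm's promise translates to: $G$ has $r$ distinguished vertices, each with at least $q$ other vertices within directed distance $k$.

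The adversary strategy is: on query $(u,v)$, declare the vertex with smaller current out-degree in $G$ the winner (ties broken by smaller total degree, then by index). A constant value assignment is consistent with any response sequence, so this strategy is realizable. The resulting $G$ enjoys the following structural invariant: for every vertex $u$, if one orders its out-neighbors $w_1, \ldots, w_{d_u}$ chronologically, then the final out-degree of $w_i$ is at least $i-1$ (just before the $i$-th edge out of $u$ is inserted $u$ has out-degree $i-1$, so the loser $w_i$ has current, hence final, out-degree at least $i-1$). Two consequences will be used below: $\sum_{w \in N^+(u)} d_w \ge \binom{d_u}{2}$ and, more strongly, the multiset of out-degrees in $N^+(u)$ majorizes $\{0, 1, \ldots, d_u - 1\}$.

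For the bound $m = \Omega(q^{1 + 1/(2^k-1)})$, pick any single one of the $r$ distinguished vertices as $v$. Set $\alpha_k = 2^k/(2^k-1)$, which satisfies $\alpha_1 = 2$ and the recurrence $\alpha_k = 2\alpha_{k-1}/(1+\alpha_{k-1})$. Induct on $k$. The base $k=1$ is immediate: $d_v \ge q$ and the invariant give $m \ge \binom{q}{2} + q$. For the inductive step with $d = d_v$, the $d$ out-neighbors of $v$ have $(k-1)$-reaches $r_1, \ldots, r_d$ summing to at least $q - d - 1$. The inductive hypothesis contributes $\Omega(r_i^{\alpha_{k-1}})$ edges per out-neighbor's $(k-1)$-BFS subgraph; the majorization form of the invariant is used to charge these contributions to disjoint portions of the edge budget (up to constants). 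Jensen (using $\alpha_{k-1} \ge 1$) then yields $\sum r_i^{\alpha_{k-1}} \ge (q-d)^{\alpha_{k-1}} d^{1-\alpha_{k-1}}$. Combined with the direct bound $m = \Omega(d^2)$ from the invariant applied at $v$, the balance $d \asymp q^{\alpha_{k-1}/(1+\alpha_{k-1})}$ gives $m = \Omega(q^{2\alpha_{k-1}/(1+\alpha_{k-1})}) = \Omega(q^{\alpha_k})$.

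For the bound $m = \Omega(q \cdot r^{1/(2^{k-1})})$, also induct on $k$. Base $k=1$: $r$ vertices each with out-degree at least $q$ give $m \ge rq$. For the inductive step, the invariant attaches to each center $v_j$ an out-neighbor $u_j^*$ whose $(k-1)$-reach is $\Omega(q/d_{v_j})$; dyadic bucketing on $d_{v_j}$ produces $\Omega(r/\log n)$ centers whose out-degrees lie in a common bucket $[d, 2d)$, and hence $\Omega(r/\log n)$ new ``centers'' $u_j^*$ each with $(k-1)$-reach $\Omega(q/d)$. Applying the inductive hypothesis to these surrogates gives $m = \Omega((q/d)(r/\log n)^{1/2^{k-2}})$; combining with the direct contribution $m = \Omega(rd)$ from the centers' out-edges, balancing over $d$, and absorbing the logarithmic factor yields the desired bound. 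The main obstacle is justifying the Jensen summation in the previous paragraph: the $(k-1)$-BFS subgraphs rooted at different out-neighbors of $v$ overlap in principle, so the distributional majorization form of the invariant (not merely its sum) is needed to guarantee enough charged edges with distinct sources, producing the stronger exponent $2\alpha_{k-1}/(1+\alpha_{k-1})$ rather than the weaker $2\alpha_{k-1}/(2+\alpha_{k-1})$ one would obtain from a single heaviest out-neighbor.
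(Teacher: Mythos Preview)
Your adversary and the graph-theoretic reduction are exactly the paper's: the ``smaller current out-degree wins'' oracle, and the observation that guaranteeing $x\ge_k y$ forces a directed path of length $\le k$ from $x$ to $y$ in the comparison graph. Your chronological out-neighbor invariant ($d_{w_i}\ge i-1$) is also correct and is, in disguise, the paper's core counting step.

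The gap is precisely the one you label ``the main obstacle'': summing the contributions $\Omega(r_i^{\alpha_{k-1}})$ over the out-neighbors $w_i$ of $v$. The inductive hypothesis, as stated, is a bound on the \emph{global} edge count $m$, not on edges inside a local ball, so applying it to each $w_i$ only yields $m=\Omega(\max_i r_i^{\alpha_{k-1}})$, which gives the weaker exponent $2\alpha_{k-1}/(2+\alpha_{k-1})$ you mention. Strengthening the hypothesis to ``the induced subgraph on $B_{k-1}(w)$ has $\Omega(r^{\alpha_{k-1}})$ edges'' already fails at $k=1$: the out-edges of the $w_i$ that your invariant counts need not stay inside $B_1(v)$. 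And the majorization $d_{w_{(i)}}\ge i-1$ controls only the first layer of out-degrees; it says nothing about how the deeper balls $B_{k-1}(w_i)$ interact, so it cannot by itself furnish disjoint edge sets to charge against. As written, this step is a hope rather than an argument.

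The paper avoids this recursion entirely. It fixes one center $x$, lets $S_i$ be the $i$-th BFS layer with $s_i=|S_i|$, and uses your invariant \emph{once per layer} (applied at each $v\in S_{i-1}$ to its children in $S_i$) together with Cauchy--Schwarz to get
\[
m \;\ge\; \sum_{u\in S_i} d_m(u) \;\ge\; \sum_{v\in S_{i-1}} \binom{|S_{i,v}|}{2} \;\ge\; \frac{s_i^2}{2s_{i-1}} - \frac{s_i}{2},
\]
hence $s_i\le \sqrt{3m\,s_{i-1}}$. Unrolling and using $\sum_{i\le k} s_i\ge q$ gives $m=\Omega(q^{1+1/(2^k-1)})$ directly, with no overlap issue because each inequality only spends the out-edges of vertices in a single layer $S_i$. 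The second bound then falls out for free: the unrolled recurrence actually gives $m=\Omega\!\big(q^{2^{k-1}/(2^{k-1}-1)}/s_1^{1/(2^{k-1}-1)}\big)$ for the chosen center, so taking $s_1=s_{\min}$ over the $r$ centers and balancing against the trivial $m\ge r\,s_{\min}$ yields $m=\Omega(q\,r^{1/2^{k-1}})$. This replaces your separate induction with dyadic bucketing, and in particular sidesteps the problems you would face there (the surrogates $u_j^*$ can coincide across centers, and the $\log n$ losses compound through the induction).
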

%\vspace{-.07in}
\begin{proof}
To create a worst case input we first define a strategy for the comparator and later choose values for the
elements which are consistent with the given answers, while maximizing the error of the
algorithm.

Let $G_t$ be the comparison graph at time $t$. That is, $G_t$ is a
digraph whose vertices are the $x_i$ and which
contains the directed edge $(x_i, x_j)$ if and only if before time $t$
a comparison between $x_i$ and $x_j$ has been made, and the
comparator has responded with ``$x_i \geq x_j$''. We denote the out-degree
of $x_i$ in $G_t$ by $d_t(x_i)$. Assume that at
time $t$ the algorithm wants to compare some $x_i$ and $x_{j}$.
If $d_t(x_i) \geq d_t(x_j)$ then the comparator responds with ``$x_j
\geq x_i$'', and it responds with ``$x_i \geq x_j$'' otherwise.
(The response is arbitrary when $d_t(x_i) = d_t(x_j)$.) Let $x$ be an
element that is declared by $A$ to be $k$-greater than at least $q$
elements.

Let $\ell_i = \dist(x,x_i)$, where $\dist$ gives
the length of the shortest (directed) path in the
final graph $G_{m}$. If no such path exists, we set $\ell_i =
n$. After the algorithm is done, we define  $\val(x_i) = \ell_i$. We
first claim that the values are consistent with the responses of the comparator. If
for some pair of elements $x_i,x_j$ the comparator has responded with
``$x_i \geq x_j$'', then $G_{m}$ contains edge $(x_i,x_j)$. This implies
that for any $x$, $\dist(x,x_j) \leq \dist(x,x_i) + 1$, or $\ell_i \geq
\ell_j - 1$.  Therefore the answer ``$x_i \geq x_j$" is consistent with
the given values.

Consider the nodes $x_i$ that $x$ can reach via a path of length at most
$k$.  These are exactly the elements $k$-smaller than $x$, and
thus there must be at least $q$ of them.
For $i \leq k$ let $S_i = \{ x_j\ |\ \ell_j = i\}$ and $s_i = |S_i|$.
We claim that for every $i \in [k]$,
$m \geq s_i^2/(2s_{i-1}) -
s_i/2$.
For a node $u \in S_i$, let $\pred(u)$ denote some node in
$S_{i-1}$ such that the edge $(\pred(u),u)$ is in the graph. For a
node $v \in S_{i-1}$, let $S_{i,v} = \{u \in S_i \ |\ v =
\pred(u)\}$. Note that each node $u \in S_i$ has a single designated $\pred(u) \in S_{i-1}$ and hence $S_{i,v}$'s are disjoint.
 Further, let $\outd(\pred(u),u)$ be the out-degree of
$\pred(u)$ when the comparison between $\pred(u)$ and $u$ was made (as
a result of which the edge was added to $G_m$). Note that for any
distinct nodes $u,u' \in S_{i,v}$, $\outd(v,u) \neq \outd(v,u')$ since the
out-degree of $v$ grows each time an edge to a node in $S_{i,v}$  is
added. This implies that $$\sum_{u \in  S_{i,v}} \outd(v,u) \geq \sum_{d
  \leq |S_{i,v}| - 1} d = |S_{i,v}| (|S_{i,v}| - 1)/2\ .$$ By the
definition of our comparator, for every $u \in S_i$,
$d_m(u) \geq \outd(\pred(u),u)$. This implies that
$$m \geq \sum_{v \in
  S_{i-1}} \sum_{u \in S_{i,v}} d_m(u) \geq \sum_{v \in S_{i-1}}
\frac{|S_{i,v}| (|S_{i,v}| - 1)}{2} = \frac{\sum_{v \in S_{i-1}}
  |S_{i,v}|^2 - |S_i|}{2}\ .$$
Using the inequality between the
quadratic and arithmetic means, $$\sum_{v \in S_{i-1}}
|S_{i,v}|^2 \geq \left(\sum_{v \in S_{i-1}} |S_{i,v}|\right)^2 /|S_{i-1}| =
s_i^2/s_{i-1}. $$ This implies that $m \geq \frac{s_i^2}{2s_{i-1}} -
\frac{s_i}{2}$.

We can therefore conclude that $s_i \leq \sqrt{(2m + s_i) s_{i-1}}
\leq \sqrt{3m s_{i-1}}$ since $s_i \leq n \leq m$. By applying this
inequality and using the fact that $s_0 =1$ we obtain
that $s_1^2/3\le m$ and $s_i \leq 3m \cdot (3m/s_1)^{2^{-(i-1)}}$
for $i>1$. Since $\sum_{i \leq k} s_i \ge q+1$, we thus find
that $q \leq 12 \cdot m \cdot (3m/s_1)^{2^{-(k-1)}}$.
This holds since either
\begin{enumerate}
\item $(3m/s_1)^{2^{-(k-1)}} > 1/2$ and then
  $12 \cdot m \cdot (3m/s_1)^ {2^{-(k-1)}} \geq 6m > n$,
  or
\item $(3m/s_1)^{-2^{-(k-1)}} \le 1/2$ and then
  $$(3m/s_1)^{-2^{-i+1}}/(3m/s_1)^{-2^{-i}} = (3m/s_1)^{-2^{-i}} \leq
  (3m/s_1)^{-2^{-(k-1)}} \leq 1/2$$ for $i\le k-1$, where the
  penultimate inequality holds since $s_1 < 3m$. In this case
\begin{eqnarray*}
  q-s_1 \le \sum_{i=2}^k s_i
  \leq \sum_{i=2}^k (3m)(3m/s_1)^{-2^{-(i-1)}} &\leq&  \sum_{i \leq k}
  2^{i-k}
  (3m)(3m/s_1)^{-2^{-(k-1)}} \\
  &<&
  2(3m)^{1-2^{-(k-1)}}s_1^{2^{-(k-1)}}
\end{eqnarray*}
\end{enumerate}
If $s_1 \ge q/2$, then $m = \Omega(q^2)$ since $m\ge
s_1^2/3 $.  Otherwise we have that
$$m \geq
(q/4)^{1/(1-2^{-(k-1)})}/(3s_1^{1/(2^{(k-1)} - 1)}) ,$$
implying
%\begin{equation}\EquationName{lower1}
$$m =
\Omega(\max\{s_1^2,q^{1/(1-2^{-(k-1)})}/s_1^{1/(2^{(k-1)} - 1)}\}) =
\Omega(q^{1 + 1/(2^k-1)}) ,$$
%\end{equation}
where the final equality can be seen by making the two terms in the
max equal.

Also, note that the choice of $x$ amongst the $r$ elements of the
theorem statement was arbitrary, and that $s_1$ is just the out-degree
of $x$.  Let $s_{\mathrm{min}}$ be the minimum out-degree
amongst the $r$ elements.
Then we trivially have $m \ge r\cdot s_{\mathrm{min}}$. Thus, if
$s_{\mathrm{min}} \ge q/2$ then $m \ge qr/2$, and otherwise
$$m =
\Omega(\max\{r\cdot
s_{\mathrm{min}},q^{1/(1-2^{-(k-1)})}/s_{\mathrm{min}}^{1/(2^{(k-1)} - 1)}\}) =
\Omega(q\cdot r^{1/(2^{k-1})})$$
where the final equality is again seen by making the two terms in
the max equal.
\end{proof}

From here a 
a lower bound for
max-finding by setting $r=1$, $q=n-1$, and for median-finding and
sorting by setting $r=q=n/2$. The sorting lower bound
holds for $k$-order selection of the $i^{th}$ element for any $i =
c\cdot n$ for constant $0 < c < 1$. More generally, selecting an element of order $i \geq n/2$ requires
$\Omega(i \cdot \max\{i^{1/(2^k-1)}, n^{1/(2^{k-1})}\})$ comparisons. For $i \leq n/2$ we obtain the lower bound of $\Omega((n-i) \cdot \max\{(n-i)^{1/(2^k-1)}, n^{1/(2^{k-1})}\})$ by considering the symmetric problem.
\begin{theorem}\TheoremName{maxfind-lb}
Every deterministic max-finding algorithm $A$ with error $k$
%for $k \leq \log{n}$
requires $\Omega(n^{1 + 1/(2^k-1)})$ comparisons.
\end{theorem}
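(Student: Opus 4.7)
This theorem is an immediate specialization of \Lemma{lb-lemma} with $r=1$ and $q=n-1$, so the plan has only two steps: verify that any deterministic max-finding algorithm with error $k$ satisfies the hypothesis of that lemma for these parameter values, and then read off the resulting bound.

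For the first step, let $A$ be any deterministic algorithm guaranteed to output a $k$-max on every input. Run $A$ against the adversarial comparator from the proof of \Lemma{lb-lemma} (the one that always responds in favor of the element of smaller current out-degree), and let $x^{out}$ be its output after some $m$ comparisons, with final comparison graph $G_m$. I claim $\dist(x^{out},x_j)\le k$ for every $j$ (distance measured in $G_m$), so $x^{out}$ is $k$-greater than each of the $n-1$ other elements in the sense used by the lemma. If this failed for some $x_j$, then the valuation $\val(x_i) := \dist(x^{out},x_i)$ constructed inside the lemma's proof is consistent with every response the comparator issued (as shown there), yet satisfies $\val(x_j) > \val(x^{out})+k$, contradicting the correctness of $A$ on this input. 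Hence $A$ indeed lists $r=1$ element that is $k$-greater than $q=n-1$ others, as required.

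For the second step, plug $r=1$ and $q=n-1$ into the conclusion of \Lemma{lb-lemma}:
\[
m \;=\; \Omega\!\left(\max\!\left\{(n-1)^{1+1/(2^k-1)},\; (n-1)\cdot 1^{1/(2^{k-1})}\right\}\right) \;=\; \Omega\!\left(n^{1+1/(2^k-1)}\right),
\]
since the first term of the maximum dominates.

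The only step requiring any care is the reduction in the second paragraph, which amounts to recognizing that against the out-degree--preferring comparator, producing a $k$-max is equivalent (via the valuation $\val(x_i)=\dist(x^{out},x_i)$) to exhibiting a vertex from which there is a directed path of length at most $k$ to every other vertex in the final comparison graph. No new calculation is needed beyond what \Lemma{lb-lemma} already provides.
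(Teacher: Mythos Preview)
Your proposal is correct and matches the paper's approach exactly: the paper derives \Theorem{maxfind-lb} by plugging $r=1$, $q=n-1$ into \Lemma{lb-lemma}. Your second paragraph re-traces the graph-distance argument that is already carried out inside the proof of \Lemma{lb-lemma}, so it is not needed to invoke the lemma---the hypothesis is satisfied directly by the definition of a max-finding algorithm with error $k$---but it does no harm.
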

\Theorem{maxfind-lb} implies that \ksort{k} and \kselect{k} are optimal up to a constant factor for any constant $k$.

\begin{theorem}
Every deterministic algorithm $A$ which $k$-sorts $n$ elements, or
finds an element of $k$-order $i$ for $i=c\cdot n$ with $0<c<1$
a constant,
%$k\leq \log{n}$,
requires $\Omega(n^{1 + 1/2^{k-1}})$ comparisons.
\end{theorem}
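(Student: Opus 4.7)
The plan is to apply \Lemma{lb-lemma} twice, once for sorting and once for selection, in each case identifying a suitable pair $(r, q)$ with $r, q = \Theta(n)$ so that the lemma yields $m = \Omega(q \cdot r^{1/(2^{k-1})}) = \Omega(n^{1 + 1/2^{k-1}})$.

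For the sorting case, I would first observe that if algorithm $A$ correctly $k$-sorts with output permutation $\pi$, then $x_{\pi(i)} \ge_k x_{\pi(j)}$ must hold for every pair $i > j$ under every value assignment consistent with the comparator's responses. By the valuation argument developed in the proof of \Lemma{lb-lemma} (setting $\val(y) = \dist(x_{\pi(i)}, y)$ in the final comparison graph $G_m$ and checking that this is consistent with the comparator), the guarantee that $x_{\pi(i)} \ge_k x_{\pi(j)}$ under every consistent assignment forces the existence of a directed path of length at most $k$ from $x_{\pi(i)}$ to $x_{\pi(j)}$ in $G_m$. Taking the top $\lfloor n/2 \rfloor$ elements of $\pi$ as the $r$ elements of \Lemma{lb-lemma}, each is $k$-greater (via graph paths) than all $q = \lceil n/2 \rceil$ elements in the bottom half, and the lemma immediately yields the desired bound.

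For the selection case with $i = cn$, the algorithm outputs only the single element $x_j$, but its correctness under every consistent value assignment still forces rich structure in $G_m$. In particular, applying the valuation argument with $\val(y) = \dist(x_j, y)$ shows that at least $cn - 1$ elements are reachable from $x_j$ via paths of length at most $k$, and symmetrically, with $\val(y) = -\dist(y, x_j)$ (also consistent with the comparator), at least $(1-c)n$ elements can reach $x_j$ via paths of length at most $k$. Building on this, I would argue that because the algorithm must output an element of $k$-order $cn$ under every consistent assignment, the existence, for each assignment, of a partition $S_1, S_2$ of $X \setminus \{x_j\}$ with $|S_1| = cn - 1$, $|S_2| = (1-c)n$, and $S_2 \cup \{x_j\} \ge_k S_1 \cup \{x_j\}$ pointwise, translates (via combined valuation arguments across several assignments) into directed paths of length at most $k$ in $G_m$ between $r = (1-c)n + 1 = \Theta(n)$ elements and $q = cn - 1 = \Theta(n)$ fixed elements. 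Applying \Lemma{lb-lemma} with these parameters then yields $\Omega(n^{1 + 1/2^{k-1}})$.

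The main obstacle lies in the selection case: rigorously showing that the algorithm's ``for every consistent value assignment'' correctness guarantee, which produces partitions that a priori may depend on the chosen value assignment, still forces a single fixed bipartite path structure in $G_m$ on which \Lemma{lb-lemma} can be applied. This step requires carefully combining the valuation argument under multiple adversarially chosen assignments and showing that the pointwise $k$-greater constraints of any valid partition translate back into short directed paths in the comparison graph, going beyond the direct pairwise argument that suffices for the sorting case.
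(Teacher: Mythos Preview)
Your treatment of the sorting case is correct and coincides with the paper's: once the output permutation is fixed, the top $\lfloor n/2\rfloor$ elements are each guaranteed to be $k$-greater than the bottom $\lceil n/2\rceil$ under every consistent assignment, so \Lemma{lb-lemma} applies directly with $r=q=\Theta(n)$.

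For selection you have put your finger on a genuine issue, and in fact you are more careful here than the paper itself. The paper simply asserts that one may take $r=q=n/2$ for median-finding; it does not explain how a selection algorithm, which outputs only the single element $x_j$, supplies the $r$ \emph{fixed} elements that \Lemma{lb-lemma} requires. Your concern is well-founded: the definition of $k$-order $i$ only guarantees that \emph{for each} consistent value assignment \emph{some} partition $S_1,S_2$ exists, and nothing in the definition forces a single partition to work simultaneously for all assignments. Since the proof of \Lemma{lb-lemma} uses the out-degrees of the $r$ listed elements in $G_m$ (via $m\ge r\cdot s_{\min}$), those elements must be determined by $G_m$ alone, not by a particular assignment.

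Your proposed route---combining the two distance-based valuations $\dist(x_j,\cdot)$ and $-\dist(\cdot,x_j)$---recovers the correct one-sided facts (at least $cn-1$ vertices are $k$-reachable from $x_j$, and at least $(1-c)n$ vertices $k$-reach $x_j$), but as you note this only yields paths of length $\le 2k$ from the latter set to the former, not $\le k$. To close the gap one has to argue, for each candidate vertex $z$, with the valuation $\val(y)=\dist(z,y)$: under this assignment $z$ has the minimum value, so if $z$ lies in $S_2^z$ for the witnessing partition then $z\ge_k S_1^z\cup\{x_j\}$ forces $\dist(z,\cdot)\le k$ on $i$ vertices. The remaining work is to show that $z\in S_2^z$ must hold for at least $\Theta(n)$ choices of $z$; equivalently, that not too many vertices can be ``absorbed'' into $S_1$ under their own distance valuation. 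This is the step neither you nor the paper carries out, and it is where the real content of the selection lower bound lives.
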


% We note that, perhaps surprisingly, our lower bound for
% sorting is always larger than our upper bound from
% for max-finding when $k\ge 3$.  In fact, for
% constant $k\ge 3$, the complexity of sorting (and even median-finding)
% is {\em polynomially}
% larger than that of max-finding.
% Contrast this with the
% case of a precise comparator, where the complexity of the two
% problems differ by only a logarithmic factor.

In addition we obtain that \Corollary{linear-time-max} is essentially tight.
\begin{corollary}
Let $A$ be a deterministic max-finding algorithm that makes $O(n)$
comparisons. Then $A$ has error at least $\log\log{n} - O(1)$.
\end{corollary}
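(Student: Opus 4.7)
The plan is to derive the corollary as an immediate contrapositive consequence of \Theorem{maxfind-lb}. Suppose $A$ is a deterministic max-finding algorithm that performs at most $m = c \cdot n$ comparisons for some constant $c > 0$ and achieves error $k$. Then by \Theorem{maxfind-lb}, we must have $m = \Omega(n^{1 + 1/(2^k - 1)})$, i.e. there exists a constant $c' > 0$ such that
\[
c \cdot n \;\geq\; c' \cdot n^{1 + 1/(2^k - 1)}.
\]

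Dividing both sides by $n$ yields $n^{1/(2^k - 1)} \leq c/c'$, which is a constant. Taking base-$2$ logarithms gives
\[
\frac{\log n}{2^k - 1} \;\leq\; \log(c/c') \;=\; O(1),
\]
so $2^k - 1 = \Omega(\log n)$, and hence $k \geq \log\log n - O(1)$, as claimed.

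The only step that requires any care is making sure \Theorem{maxfind-lb} still applies in the regime $k = \Theta(\log\log n)$: inspecting the proof of \Lemma{lb-lemma} (with $r = 1$, $q = n - 1$), the lower bound $\Omega(n^{1 + 1/(2^k - 1)})$ is derived without any hidden restriction on $k$ other than that the comparator's construction be valid, which it is for every $k$. Hence no additional obstacle arises and the corollary follows.
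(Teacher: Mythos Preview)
Your proof is correct and is exactly the intended derivation: the paper states this corollary without proof, treating it as an immediate consequence of \Theorem{maxfind-lb}, which is precisely what you do. Your added remark about checking that the hidden constant in the $\Omega(n^{1+1/(2^k-1)})$ bound is uniform in $k$ is a valid and worthwhile observation; tracing through the proof of \Lemma{lb-lemma} one indeed obtains an explicit bound of the form $m \ge c' \cdot q^{1+1/(2^k-1)}$ with $c'$ an absolute constant (roughly $1/48$), so the contrapositive argument goes through as written.
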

% \begin{proof}
% Suppose $m = c\cdot n$.
% By \Theorem{maxfind-lb}, $m \geq c'n^{\frac{1}{1-2^{-k}}}$ for some
% constant $c'$, and hence
% $c = m/n \geq c' n^{\frac{1}{1-2^{-k}}-1} =
% c'n^{\frac{2^{-k}}{1-2^{-k}}} \geq c'n^{2^{-k-1}}$. Therefore,
% $-k-1 \leq \log\log(c/c') - \log\log n$, giving us the claim.
% \end{proof}

\section{Conclusions}
We defined a simple and natural model
of imprecision in a result of a comparison. The model is inspired by both imprecision in human judgement of values and also by bounded but potentially adversarial errors in sporting tournaments. Despite the basic nature of the model and the vast literature on sorting and searching with faulty comparisons we are not aware of any prior efforts to address this type of errors. Our results show that there exist algorithms that are robust to imprecision in comparisons while using substantially fewer comparisons than the na\"ive methods. For deterministic algorithms our problem can equivalently be seen as finding a $k$-king in any tournament graph (or sorting elements so that each element is a $k$-king for vertices of lower order) while minimizing the number of edges checked. Our results generalize previous work on this problem that considered the case of $k=2$ \cite{ShenSW03}.

We note that in most of the results substantially tighter constants can be obtained in the bounds using small modifications of the algorithms, more careful counting
and optimization for small values of $k$ and $n$. This would yield algorithms that improve significantly on the na\"ive approach even for small values of $n$. We made only a modest effort
to improve the constants to make the presentation of the main ideas clearer.

While our lower bounds show that many of the algorithms we give are essentially optimal a number of interesting and natural problems are left open.
\begin{enumerate}
\item What is the complexity of deterministic maximum finding with error $2$? \kmaxfind{2} uses $O(n^{3/2})$ comparisons whereas our lower bound is $\Omega(n^{4/3})$ comparisons. Resolving the case of $k=2$ is likely to lead to closing of the gap for larger error $k$.
\item Can error 2 be achieved by a randomized algorithm using $O(n)$ comparisons? \tour only guarantees error 3.
\item For randomized algorithms it is also natural to consider the expected error of an algorithm. What is the lowest expected error that can be achieved using a randomized algorithm for the tasks considered in this paper? Note that in the example presented for the proof of \Theorem{performance-2}, choosing a random element would give a maximum element with expected error of 1 and this is the best possible in this example.
\item We have not addressed the complexity of randomized sorting with error $k$.
\end{enumerate}

% put back in for full version
%\section*{Acknowledgments}

\bibliographystyle{plain}
\bibliography{./allpapers}

\begin{thebibliography}{10}

\bibitem{AAC+08}
Gagan Aggarwal, Nir Ailon, Florin Constantin, Eyal Even-Dar, Jon Feldman,
  Gereon Frahling, Monika~R. Henzinger, S.~Muthukrishnan, Noam Nisan, Martin
  P\'{a}l, Mark Sandler, and Anastasios Sidiropoulos.
\newblock Theory research at {Google}.
\newblock {\em SIGACT News}, 39(2):10--28, 2008.

\bibitem{AA88}
Noga Alon and Yossi Azar.
\newblock Sorting, approximate sorting, and searching in rounds.
\newblock {\em SIAM J. Discrete Math}, 1(3):269--280, 1988.

\bibitem{AU91}
Shay Assaf and Eli Upfal.
\newblock Fault tolerant sorting networks.
\newblock {\em SIAM J. Discrete Math}, 4(4):472--480, 1991.

\bibitem{BH08}
Michael Ben-Or and Avinatan Hassidim.
\newblock The bayesian learner is optimal for noisy binary search (and pretty
  good for quantum as well).
\newblock In {\em Proceedings of the 49th Annual IEEE Symposium on Foundations
  of Computer Science (FOCS)}, pages 221--230, 2008.

\bibitem{BFPRT73}
Manuel Blum, Robert~W. Floyd, Vaughan~R. Pratt, Ronald~L. Rivest, and Robert~E.
  Tarjan.
\newblock Time bounds for selection.
\newblock {\em J. Comput. Syst. Sci.}, 7(4):448--461, 1973.

\bibitem{BT83}
B\'{e}la Bollob\'{a}s and Andrew Thomason.
\newblock Parallel sorting.
\newblock {\em Discrete Appl. Math.}, 6:1--11, 1983.

\bibitem{BorgKo93}
Ryan~S. Borgstrom and S.~Rao Kosaraju.
\newblock Comparison-based search in the presence of errors.
\newblock In {\em Proceedings of the 25th Annual ACM Symposium on Theory of
  Computing (STOC)}, pages 130--136, 1993.

\bibitem{Cicalese13}
Ferdinando Cicalese.
\newblock {\em Fault-Tolerant Search Algorithms - Reliable Computation with
  Unreliable Information}.
\newblock Monographs in Theoretical Computer Science. An EATCS Series.
  Springer, 2013.

\bibitem{David88}
Herbert~Aron David.
\newblock {\em The Method of Paired Comparisons}.
\newblock Charles Griffin \& Company Limited, 2nd edition, 1988.

\bibitem{FeigeEtAl94}
Uriel Feige, Prabhakar Raghavan, David Peleg, and Eli Upfal.
\newblock Computing with noisy information.
\newblock {\em SIAM J. Comput.}, 23(5), 1994.

\bibitem{Feldman:09robust}
Vitaly Feldman.
\newblock Robustness of evolvability.
\newblock In {\em Proceedings of COLT}, pages 277--292, 2009.

\bibitem{FinGraIta09}
Irene Finocchi, Fabrizio Grandoni, and Giuseppe~F. Italiano.
\newblock Optimal resilient sorting and searching in the presence of memory
  faults.
\newblock {\em Theor. Comput. Sci.}, 410:4457--4470, 2009.

\bibitem{finocchi2004sas}
Irene Finocchi and Giuseppe~F. Italiano.
\newblock Sorting and searching in the presence of memory faults (without
  redundancy).
\newblock In {\em Proceedings of the 36th Annual ACM Symposium on Theory of
  Computing (STOC)}, pages 101--110, 2004.

\bibitem{GGK03}
William~I. Gasarch, Evan Golub, and Clyde~P. Kruskal.
\newblock Constant time parallel sorting: an empirical view.
\newblock {\em J. Comput. Syst. Sci.}, 67(1):63--91, 2003.

\bibitem{HH81}
Roland H\"{a}ggkvist and Pavol Hell.
\newblock Parallel sorting with constant time for comparisons.
\newblock {\em SIAM J. Comput.}, 10(3):465--472, 1981.

\bibitem{HH82}
Roland H\"{a}ggkvist and Pavol Hell.
\newblock Sorting and merging in rounds.
\newblock {\em SIAM Journal on Algebraic and Discrete Methods}, 3(4):465--473,
  1982.

\bibitem{KK07}
Richard~M. Karp and Robert Kleinberg.
\newblock Noisy binary search and its applications.
\newblock In {\em SODA}, pages 881--890, 2007.

\bibitem{Kearns:98}
Michael Kearns.
\newblock Efficient noise-tolerant learning from statistical queries.
\newblock {\em Journal of the ACM}, 45(6):983--1006, 1998.

\bibitem{Landau53}
H.G. Landau.
\newblock On dominance relations and the structure of animal societies: {III.
  The} condition for a score structure.
\newblock {\em The Bulletin of Mathematical Biophysics}, 15(2):143--148, 1953.

\bibitem{Lovasz:86}
L\'aszl\'o Lov\'{a}sz.
\newblock {\em An Algorithmic Theory of Numbers, Graphs, and Convexity}.
\newblock CBMS-NSF Regional Conference Series in Applied Mathematics 50, SIAM,
  1986.

\bibitem{Maurer80}
Stephen~B. Maurer.
\newblock The king chicken theorems.
\newblock {\em Mathematics Magazine}, 53(2):67--80, March 1980.

\bibitem{Pelc02}
Andrzej Pelc.
\newblock Searching games with errors---fifty years of coping with liars.
\newblock {\em Theor. Comput. Sci.}, 270(1-2):71--109, 2002.

\bibitem{ravikumar1987sle}
Bala Ravikumar, K.~Ganesan, and K.~B. Lakshmanan.
\newblock On selecting the largest element in spite of erroneous information.
\newblock In {\em Proceedings of the 4th Annual Symposium on Theoretical
  Aspects of Computer Science (STACS)}, pages 88--99, 1987.

\bibitem{Ren61}
Alfr\'ed R\'enyi.
\newblock On a problem in information theory.
\newblock {\em Magyar Tud. Akad. Mat. Kutat\'o Int. K\"ozl}, 6:505--516, 1962.

\bibitem{KMRSW80}
Ronald~L. Rivest, Albert~R. Meyer, Daniel~J. Kleitman, Karl Winklmann, and Joel
  Spencer.
\newblock Coping with errors in binary search procedures.
\newblock {\em J. Comput. Sys. Sci.}, 20(3):396--405, 1980.

\bibitem{ShenSW03}
Jian Shen, Li~Sheng, and Jie Wu.
\newblock Searching for sorted sequences of kings in tournaments.
\newblock {\em {SIAM} J. Comput.}, 32(5):1201--1209, 2003.

\bibitem{SA05}
Scott~M. Smith and Gerald~S. Albaum.
\newblock {\em Fundamentals of Marketing Research}.
\newblock Sage Publications, Inc., first edition, 2005.

\bibitem{Thurstone27a}
Louis~Leon Thurstone.
\newblock A law of comparative judgment.
\newblock {\em Psychological Review}, 34:273--286, 1927.

\bibitem{ulam}
Stanis{\l}aw~Marcin Ulam.
\newblock {\em Adventures of a Mathematician}.
\newblock Scribner's, New York, 1976.

\bibitem{Valiant75}
Leslie~G. Valiant.
\newblock Parallelism in comparison problems.
\newblock {\em SIAM J. Comput.}, 4(3):348--355, 1975.

\bibitem{yao1985ftn}
Andrew Chi-Chih Yao and Frances~Foong Yao.
\newblock On fault-tolerant networks for sorting.
\newblock {\em SIAM J. Comput.}, 14(1):120--128, 1985.

\end{thebibliography}

% \newpage
% \appendix
% \input{appendix}

\end{document}